\documentclass[runningheads,a4paper]{llncs}
\setcounter{tocdepth}{3}
\usepackage{graphicx}
\usepackage{latexsym,amssymb,amsmath}
\usepackage[utf8]{inputenc}
\usepackage[all]{xy}
\usepackage{gastex}
\usepackage{color}
\usepackage[active]{srcltx}


\label{definitions}

\usepackage{url}
\urldef{\mailsa}\path|{klima,polak}@math.muni.cz|

\begin{document}
\title{Syntactic Structures of Regular Languages}
\author{Ond\v rej Kl\'{i}ma and Libor Pol\'ak
\thanks{The authors were supported by the 
Institute for Theoretical Computer Science (GAP202/12/G061),
Czech Science Foundation.}}
\institute{Department of Mathematics and Statistics, Masaryk University\\
Kotl\'a\v rsk\'a 2, 611 37 Brno, Czech Republic\\
\mailsa\\
\url{http://www.math.muni.cz}}

\maketitle

\begin{abstract}
We introduce here the notion of syntactic lattice algebra
which is an analogy of the syntactic monoid and of the syntactic
semiring. We present a unified approach to get those three structures.
\end{abstract}

\section{Introduction}\label{Introduction}

The algebraic theory of regular languages arose with 
the Eilenberg theorem~\cite{e} which establishes  bijection between
the class of all varieties of regular languages and
the class of all pseudovarieties of finite monoids.
In this correspondence a given language belongs to a certain variety of regular languages if and only if the syntactic monoid
of the language belongs to the corresponding pseudovariety of finite monoids.
The original motivation of that theory was looking for algorithmic procedures for deciding the membership 
in various significant classes of regular languages.
From that reason the equational description of pseudovarieties of finite algebras by Reiterman~\cite{reiterman}
plays a useful role for deciding the membership in corresponding pseudovarieties of monoids.

Recall that a variety of regular languages is a class closed under 
Boolean operations, quotients and preimages in homomorphisms. 
Since  not all natural classes of regular languages are varieties,
the research in algebraic theory of regular languages
was later also devoted to generalizations of the Eilenberg correspondence 
to a more general kind of classes.
The first such contribution~\cite{pin-eilenberg} introduced positive varieties 
of languages for which pseudovarieties of finite ordered monoids 
are appropriate algebraic counterparts. Here a positive variety need not to be closed under complementation
and the correspondence uses the fact that the syntactic monoid of a regular language is implicitly 
equipped by the compatible partial order.
Then the second author in~\cite{lp-examples} introduced the notion of disjunctive varieties of regular languages 
which need not to be closed under intersection (and complementation) and for which pseudovarieties of 
finite idempotent semirings were considered. 
Now one uses the syntactic semiring of a language  in the correspondence,  when he/she wants
to test the membership of that language in a considered disjunctive variety.

On the other hand Straubing~\cite{s} introduced the notion of 
$\mathcal C$-varieties which need not to be 
closed under preimages in all homomorphism, but only those from the specific class of homomorphism $\mathcal C$.  
Note that this concept can be combined with the mentioned variants of the Eilenberg correspondence,
where positive $\mathcal C$-varieties and disjunctive $\mathcal C$-varieties are already 
considered in~\cite{chaubard} and~\cite{lp-varieties}, respectively. 
Now the syntactic structure is the whole syntactic homomorphism.
Another generalization was done in~\cite{ggp-icalp-2008} 
where a single alphabet is considered. Here Boolean algebras and lattices of regular languages 
(over a fixed alphabet) are 
studied  on the side of classes of regular languages.
All variants of the Eilenberg theorem proved their usefulness due to  
existing characterizations via equational  descriptions for the corresponding pseudovarieties 
of the syntactic structures --- relevant references are~\cite{reiterman}
for the classical Eilenberg correspondence, \cite{pin-weil} for pseudovarieties of ordered monoids, 
\cite{kunc} for $\mathcal C$-pseudovarieties, and~\cite{ggp-icalp-2008} for Boolean algebras and lattices. 

Certain modifications of Eilenberg theorem outside regular languages based on Stone duality 
from~\cite{ggp-icalp-2008} were developed in last five years --- see e.g.~\cite{boj,borlido}.
Further, some papers started to analyze a categorical generalizations of the Eilenberg theorem.
In particular~\cite{uramoto} introduced so-called semi-galois categories and the Eilenberg theorem for them, 
while Ad\'amek and et. in the series of conference papers~\cite{adamek-1,adamek-2}
studied certain pairs of dual monoidal categories of (ordered) algebras. 
Consequently, a uniform description of what is the Eilenberg theorem  was described in~\cite{adamek-3}. 
All this categorical work is put together in the long paper~\cite{adamek-4}[version 3] 
however some other developments can be expected in near future.
Notice that one of the statements from~\cite{adamek-4} is that all 
mentioned Eilenberg correspondences can be obtained as an application 
of their main general categorical version of Eilenberg type theorem. 
In particular, the case of disjunctive varieties corresponding to 
pseudovarieties of idempotent semirings.

The aim of the present contribution is to introduce modification
of the notion of the syntactic monoid which would be useful in other variants of Eilenberg type theorems.
The class of languages which we would like to consider are not closed under
any Boolean operation. Therefore, the work can be viewed as a continuation of the work concerning 
disjunctive varieties of languages.
The experience with Eilenberg type correspondences gives us an intuitive idea 
that 
when one looses closure properties on the side of classes
of languages, then one needs to consider a richer syntactic structure of the 
language.
In this research we try to prolong this naive idea in such a way that we try to complete the syntactic semiring 
into a distributive lattice. 
Unfortunately, a potential Eilenberg type theorem does not follow from the mentioned general categorical results.
The problem is that the category of distributive lattices does not 
satisfied assumptions specified in papers~\cite{adamek-1,adamek-2,adamek-3,adamek-4}.\footnote{As 
mentioned above, the category of idempotent semirings does.}
In particular, if one takes for the category $\mathcal D$ in~\cite{adamek-1} bounded distributive lattices, 
then the category $\mathcal C$ would be the category
of ordered sets, which is not considered in these papers -- the ordered algebras are considered 
only in $\mathcal D$. Another and probably more significant difference is
that bounded distributive lattices do not satisfy the assumption 4.8.(b) in~\cite{adamek-1}, 
since homomorphisms between distributive lattices are not naturally
equipped by the structure of distributive lattices, 
which seems to be an essential condition in the mentioned categorical approach.
Notice also that the theory of semi-galois categories from~\cite{uramoto} can 
not be also applied, 
since the basic assumption  is that pushouts and pullbacks in the semi-galois category
need to be computed in the same way as in the category of sets $\mathbf{Set}$. 
This is not the case for the category of idempotent semirings neither the category of bounded distributive lattices.  
From all that reasons, we strongly believe that the study of the uniform  approach to syntactic structures presented 
in this paper could lead to a new type of Eilenberg correspondences, even if the techniques in the paper 
are quite elementary. 

\bigskip

The basic approach of the present paper can be briefly explained in the following way. 
It is well-known that the syntactic monoid of a language $L$  over the alphabet $A$
can be viewed as the transformation monoid 
of the minimal complete deterministic automaton $\mathsf D_L$ of $L$.
More precisely, we let words of $A^*$ act on states of
$\mathsf D_L$ and the composition of such transformations 
corresponds to multiplication in the syntactic monoid.

By Brzozowski construction each state of the minimal automata can be identified with
the language accepted from that state, therefore the elements of $A^*$ 
can be considered as unary operations 
on the set $2^{A^*}$ of all languages over $A$. 
These unary operations are compositions of basic unary operations given 
by letters. 
Since the composition of mappings is associative, 
compositions of unary operations correspond exactly to words.
To get analogues of the monoid $A^*$, we consider structures 
with more operations, namely we  
use here the following three term algebras:
\begin{itemize}
\item
$F$ is the absolutely free algebra over the alphabet $A$ 
with the operation symbol $\cdot$ and nullary symbol $\lambda$,
\item
to get $F'$ we enrich the previous signature by binary $\wedge$ and nullary
$\top$,
\item
to get $F''$ we enrich the last signature by binary $\vee$ and nullary
$\perp$.
\end{itemize}
Now we let 
our terms act 
on the set $2^{A^*}$ in a natural way 
(the formal definitions are in Section~\ref{s:transformation} ).
We show that identifying terms of $F$ ($F'$ and $F''$) 
giving the same transformations, we get
exactly the free monoid $A^*$ over $A$, (the free semiring $A^\square$
over $A$ and the free, so-called, lattice algebra $A^\diamond$
over $A$, respectively).
Let us stress that all our considerations concern three levels: level of monoids -- 
the classical one (Pin~\cite{pi,pin-varieties}), 
level of semirings  
(considered also in Pol\'ak~\cite{lp-examples,lp-syntactic}), 
and that of lattice algebras -- a new contribution.

When generating subalgebras in $2^{A^*}$ 
by a single regular language $L$
using terms from $F$, $F'$ and $F''$,
and choosing the final states
appropriately, we get the classical minimal complete deterministic
finite automaton of $L$ (here called the canonical finite automaton
of $L$), the canonical meet automaton of $L$ (see Section~6 of 
Pol\'ak~\cite{lp-examples}) and the canonical
lattice automaton of $L$, respectively.
Section~\ref{s:canonical}
is devoted to canonical automata,
then transforming those automata accordingly, we get the 
corresponding syntactic structures
in Section~\ref{s:syntactic} in all three levels.
Our constructions are also accompanied by examples.
Moreover, a significant instance of a future
Eilenberg type theorems is presented.

\section{Specific Algebraic Structures}\label{s:specific-algebraic}

Usually, a semiring has two binary operations denoted by $+$ and $\cdot$, where the 
neutral element for $+$ is denoted by $0$.
Since we work with idempotent semirings, which can be naturally ordered, we use the symbol $\wedge$ instead of $+$, and 
the symbol $\top$ instead of $0$  in the following basic definition. 
By an {\em idempotent semiring} we mean the structure 
$(S,\wedge,\cdot,\top,1)$ where 
$(S,\wedge,\top)$ is a commutative idempotent monoid, also called {\it semilattice},  with the neutral element $\top$,
$(S,\cdot,1)$ is a monoid with the neutral element 1 and the zero element $\top$, 
and
the operations $\wedge$ and $\cdot$ satisfy the usual distributivity laws 
$$(\, \forall\, a,b,c \in S\,)\quad  a\cdot(b\wedge c)=a\cdot b \wedge a\cdot c, \ \ (b\wedge c)\cdot a=b\cdot a \wedge c \cdot a \, .$$
The set $S$ can be naturally ordered: for every $a,b\in S$
we have $a\leq b$ if and only if $a\wedge b=a$. Then $\top$ becomes the greatest element in $(S,\leq)$. 
This explains our choice of the symbol $\top$.

The elements of the free idempotent semiring $A^\square$
over the set $A$
can be represented
by finite subsets of $A^*$. 
This representation is one-to-one.
Operations are the operation of union and the obvious 
multiplication, $\emptyset$
is the neutral element for $\wedge$, it is the zero for $\cdot$ and 
$\{\lambda\}$ is the neutral element for the multiplication. 
If we identify each word $u\in A^*$ with the element $\{u\}\in A^\square$, then we can see
$A^*$ as a subset of $A^\square$. Under this identification, for each 
$U=\{u_1,\dots,u_k\},\, k> 0,\, u_1,\dots,u_k\in A^*$
we can write $U=u_1\wedge\dots\wedge u_k$.

A next structure we use 
is the free bounded distributive
lattice $A^\diamond$ over $A^*$. 
The representation of the free bounded distributive lattice $F_{BDL}(X)$ over a finite set $X$ is well-known
(see e.g. Gr\"atzer~\cite{gratzer}). Usually the elements of $F_{BDL}(X)$ are represented by upper sets in 
$(\mathcal P (X),\subseteq)$ (here $\mathcal P(X)$ denotes the set of all 
subsets of $X$)
with the operation intersection and union. Moreover $\emptyset$ is the 
smallest element, i.e. 
$\emptyset=\bot$  and $\mathcal P(X)$ is the greatest element, i.e $\mathcal P(X)=\top$. 
Alternatively, each such upper set can be represented just by its minimal elements -- then 
$\bot$ is still represented by $\emptyset$, however $\top$ is now 
represented by $\{\emptyset\}$. 
Both these representations work with terms in the form 
\begin{equation}\label{e:new-free-BDL}
(x_{1,1}\wedge\dots\wedge x_{1,r_1})\vee\dots\vee
 (x_{k,1}\wedge\dots \wedge x_{k,r_k})\, .
 \end{equation}
The first representation adds as much as possible conjunctions into the 
form~(\ref{e:new-free-BDL}), however
the second representation in contrary remove all superfluous conjunctions to get the shortest expression as possible. 

The free bounded distributive lattice over a countable set $X$ can be obtained as a union of 
the free bounded distributive lattices over finite subsets of $X$.  
Here we just describe the resulting structure in the case when
$X$ is  equal to $A^*$ where we use the second representation from the previous paragraph.
The elements of $A^\diamond$
are of the form
\begin{eqnarray}\label{e:incomparable-U}
& \{\{u_{1,1},\dots,u_{1,r_1}\},\dots,\{u_{k,1},\dots,
 u_{k,r_k}\}\},\ \text{where } k,r_1,\dots,r_k\geq 0, 
 & \nonumber \\
&  u_{i,j}\in A^*\  \text{for } 
i=1,\dots,k,\ j=1,\dots,r_i\,  
 \text{and the inner sets}  & \\ 
& \{u_{i,1},\dots,u_{i,r_i}\}\text{'s 
are incomparable with respect to} \subseteq \, . & \nonumber
\end{eqnarray}
The interpretation of the element of the form (\ref{e:incomparable-U}) is 
\begin{equation}\label{e:interpretation}
(u_{1,1}\wedge\dots\wedge u_{1,r_1})\vee\dots\vee
 (u_{k,1}\wedge\dots \wedge u_{k,r_k})\, .
 \end{equation}
Particularly, each element of the form $\{\{u_1,\dots,u_{k}\}\}\in A^\diamond$
is identified with $u_1\wedge  \dots\wedge u_{k}$,
which is equal to $\{u_1,\dots,u_{k}\}$ in  $A^\square$.
Thus we can see $A^\square$ as a subset of  $A^\diamond$  under the identification $U\mapsto \{U\}$.
Defining the operations $\wedge$ and $\vee$ on the set $A^\diamond$,
one uses the form (\ref{e:interpretation}) for the element of the form (\ref{e:incomparable-U}). 
In the case of the definition of $\vee$, one omits 
the superfluous $(u_{j,1}\wedge\dots \wedge u_{j,r_j})$'s, while in the case
of the definition $\wedge$ one uses the distributivity law first 
and then again omits the superfluous $(u_{j,1}\wedge\dots \wedge u_{j,r_j})$'s. 
In this way one gets the (unique) element of the form (\ref{e:incomparable-U}) in both cases.
Notice that $\{\emptyset\}$ is the greatest element in $A^\diamond$ and $\emptyset$ is the smallest one.

We equip the structure $A^\diamond$  with a multiplication, namely we extend 
the multiplication from $A^*$ to 
 $A^\diamond$ using
 \begin{eqnarray} 
&{\mathcal U}\cdot({\mathcal V}\wedge{\mathcal W})= {\mathcal U}\cdot {\mathcal V} \wedge {\mathcal U}\cdot {\mathcal W},\ 
 ({\mathcal U}\wedge {\mathcal V})\cdot w= 
 {\mathcal U}\cdot w \wedge {\mathcal V}\cdot w\, ,& \label{e_nasobeni} \\
 &{\mathcal U}\cdot({\mathcal V}\vee{\mathcal W})= {\mathcal U}\cdot {\mathcal V} \vee {\mathcal U}\cdot {\mathcal W},\ 
 ({\mathcal U}\vee {\mathcal V})\cdot w= 
 {\mathcal U}\cdot w \vee {\mathcal V}\cdot w\, ,&\nonumber 
 \end{eqnarray}
 for ${\mathcal U},{\mathcal V},{\mathcal W}\in A^\diamond,\,w\in A^*$.

 \medskip
In this paper we consider
various kinds of automata. All of them 
are deterministic and complete, however they could have 
an infinite number of states.
When using the term  {\it semiautomata}, no initial
nor final states are specified.

Having an equivalence relation $\rho$ on a set $G$ and an element
$a\in G$, we denote by $a\rho$ the class of $G/\rho$ containing $a$.

\section{Transformation structures}\label{s:transformation}

Let $A$ be a finite non-empty set. The aim of this section is to elaborate  
the actions of term algebras mentioned above on languages over the alphabet $A$.

For $u\in A^*$ and $L\subseteq A^*$, we write $u^{-1}L=\{\, v\in A^* \mid uv\in L\, \}$.
We speak about a~{\em left quotient} of $L$.
\medskip
 
\noindent {\bf Monoids.} Let $F$ be the 
absolutely free algebra (that is, the algebra of all terms)
 over a set $A$ with respect to the
 binary operational symbol $\cdot$ and
nullary operational symbol $\lambda$.

We define inductively the actions of elements of $F$ on subsets of
 ${A^*}$ :
\begin{equation}
 L\circ \lambda = L,\,\  L\circ a=a^{-1}L \text{ for } a\in A,\, 
 \ L\circ (u \cdot v)= 
 (L\circ u)\circ v \text{ for } u,v\in F
 \,. \label{e-monoidy}
 \end{equation}
 
 This leads to a natural identification of certain pairs 
 of elements of $F$, namely:
 for $u,v\in F$, we put
 $u\mathrel{\rho^*} v$ if and only if $(\,\forall\ L\subseteq A^* )\ L\circ u= L\circ v $.
 \def\m{\mathrel{\rho^*}}

\begin{proposition}\label{p:free-monoid}
 The relation $\rho^*$ is a congruence relation on $F$ and
 $F/\rho^*$ is isomorphic to the free 
 monoid $A^*$ over $A$ via the extension
 of the mapping $a\rho^* \mapsto a,\ a\in A$.
\end{proposition}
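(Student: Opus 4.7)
The plan is to compare $F$ with $A^*$ via the natural evaluation map. Define $\phi:F\to A^*$ recursively by setting $\phi(\lambda)=\lambda$ (the empty word), $\phi(a)=a$ for $a\in A$, and $\phi(u\cdot v)=\phi(u)\phi(v)$. Viewed in the signature of $F$ -- interpreting the nullary symbol as the empty word and $\cdot$ as concatenation -- the map $\phi$ is, by construction, a surjective algebra homomorphism onto the free monoid $A^*$.

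The crucial ingredient is the identity
\begin{equation*}
L\circ u \;=\; \phi(u)^{-1}L \qquad \text{for every } u\in F \text{ and } L\subseteq A^*,
\end{equation*}
which I would prove by induction on the structure of $u$. The base cases $u=\lambda$ and $u=a\in A$ are precisely the first two clauses of (\ref{e-monoidy}). For $u=u_1\cdot u_2$, the inductive hypothesis together with the third clause give
\begin{equation*}
L\circ (u_1\cdot u_2) \;=\; (L\circ u_1)\circ u_2 \;=\; \phi(u_2)^{-1}\bigl(\phi(u_1)^{-1}L\bigr) \;=\; \bigl(\phi(u_1)\phi(u_2)\bigr)^{-1}L,
\end{equation*}
which is $\phi(u_1\cdot u_2)^{-1}L$ as required.

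With this identity in hand, I claim $\rho^*$ coincides with $\ker\phi$. The inclusion $\ker\phi\subseteq\rho^*$ is immediate from the displayed identity. For the reverse inclusion, if $u\mathrel{\rho^*}v$, then applying the identity to the test language $L=\{\phi(u)\}$ yields $\{\lambda\}=\phi(u)^{-1}\{\phi(u)\}=\phi(v)^{-1}\{\phi(u)\}$, and the right-hand side equals $\{\lambda\}$ only when $\phi(v)=\phi(u)$. Hence $\rho^*$ is the kernel of an algebra homomorphism, so it is automatically a congruence, and the first isomorphism theorem delivers a bijective homomorphism $F/\rho^*\to A^*$ sending $a\rho^*\mapsto a$ for every $a\in A$.

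The single non-formal step is the choice of the separating language $L=\{\phi(u)\}$ in the last paragraph; the remainder is a routine structural induction combined with the usual factorisation of a homomorphism through its kernel. I therefore expect no real obstacle beyond writing out the induction cleanly.
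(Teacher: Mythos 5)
Your proof is correct. It reaches the same two essential facts as the paper's proof but packages them differently: you first establish, by structural induction, the evaluation identity $L\circ u=\phi(u)^{-1}L$ and then identify $\rho^*$ with $\ker\phi$, letting the first isomorphism theorem do the rest; the paper instead verifies directly that $\rho^*$ is compatible with $\cdot$, then checks associativity and the neutrality of $\lambda$ modulo $\rho^*$ to conclude that every class has a word as representative, and finally separates distinct words. The separating witness is identical in both arguments (the singleton language $\{u\}$, for which $\lambda\in\{u\}\circ u$ but $\lambda\notin\{u\}\circ v$ when $u\neq v$). Your route is slightly more economical, since the single induction on terms subsumes the paper's separate verifications of the monoid axioms in the quotient, and the congruence property of $\rho^*$ comes for free once $\rho^*=\ker\phi$ is known; the paper's route has the advantage of serving as a template that it then reuses almost verbatim for the semiring and lattice-algebra levels, where no such clean evaluation homomorphism onto a concrete target is available in advance and the quotient structure must be worked out by hand.
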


\begin{proof}
Let $u,v,w\in F$. 
If $u \m v$ then, for each $L\subseteq A^*$, we have
$L\circ u=L\circ v$. Therefore 
$L\circ (u\cdot w) = (L\circ u)\circ w = 
(L\circ v)\circ w = L\circ (v\cdot w)$, which gives 
$u\cdot w \m v\cdot w$.
Similarly,
$L\circ (w\cdot u) = (L\circ w)\circ u = 
(L\circ w)\circ v = L\circ (w\cdot v)$, which gives 
$w\cdot u \m w\cdot v$.
Thus $\m$ is a congruence relation on $F$.

Now we prove that, for each $u,v,w\in F$, we have
$(u\cdot v) \cdot w \m u\cdot(v\cdot w)$, $\lambda\cdot u \m u$ and
$u\cdot \lambda \m u$.
Indeed, choosing $L\subseteq A^*$, it holds
$L\circ((u\cdot v)\cdot w)=(L\circ(u\cdot v))\circ w
=((L\circ u)\circ v)\circ w=(L\circ u)\circ (v\cdot w)
=L\circ (u\cdot (v\cdot w))$.
Furthermore,
$L\circ (\lambda\cdot u)=(L\circ\lambda)\circ u= L\circ u,
$ and $L\circ (u\cdot \lambda)=(L\circ u)\circ\lambda=L\circ u$.

Thus we can omit brackets in elements of $F$ and $\lambda$ acts as a neutral
element. Therefore every element of $F/\rho^*$ can be represented 
by a word from $A^*$.
It remains to show that different words $u$ and $v$ represent different
elements of $F/\rho^*$. Indeed,
for $u\not =v$, we have $\lambda\in \{u\}\circ u$ but $\lambda\not\in \{u\}\circ v$.
\qed\end{proof}

\medskip
 
\noindent {\bf Semirings.} Let $F'$ be the absolutely free algebra 
 over $A$ with respect to the
 operational symbols $\cdot,\lambda$, binary symbol $\wedge$ 
 and nullary symbol $\top$. 
 We define inductively the actions of elements of $F'$ on $2^{A^*}$:
 we use the formulas from (\ref{e-monoidy}) for $u,v\in F'$
 and
 \begin{equation}
  L\circ \top = A^*,\ L\circ (u \wedge v)= (L\circ u)
  \cap (L\circ v) \text{ for } u,v\in F'\,.
  \label{e-semiringy}
  \end{equation}

  Again, it leads to certain identification of pairs of elements of
  $F'$, namely:
for $u,v\in F'$, we put $u\mathrel{\rho^\square} v$ if and only if  $(\,\forall\ L\subseteq A^* )\ 
L\circ u= L\circ v$.
\def\s{\mathrel{\rho^\square}}
\def\l{\mathrel{\rho^\diamond}}
\def\b{\mathrel{\rho^\circ}}

\def\ml{\mathrel{\sim_L^*}}
\def\sl{\mathrel{\sim_L^\square}}
\def\ll{\mathrel{\sim_L^\diamond}}

\begin{proposition}\label{l:free semiring}
The relation $\rho^\square$ 
 is a congruence relation on $F'$ and
 $F'/\rho^\square$ is isomorphic to the 
 free idempotent semiring 
 $A^\square$ over $A$ via the extension
 of the mapping $a\rho^\square \mapsto a,\ a\in A$.
\end{proposition}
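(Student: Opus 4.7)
The plan is to follow the pattern of Proposition~\ref{p:free-monoid}, augmented for the additional operations $\wedge$ and $\top$. First I verify that $\rho^\square$ is a congruence: the argument for $\cdot$-compatibility from the monoid case transfers verbatim, and for $\wedge$-compatibility, if $u\s v$ and $w\in F'$, then for every $L\subseteq A^*$ we have
$L\circ(u\wedge w)=(L\circ u)\cap(L\circ w)=(L\circ v)\cap(L\circ w)=L\circ(v\wedge w)$,
so $u\wedge w\s v\wedge w$.

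Next I introduce a canonical evaluation map $\phi:F'\to A^\square$ by structural induction, setting $\phi(a)=\{a\}$ for $a\in A$, $\phi(\lambda)=\{\lambda\}$, $\phi(\top)=\emptyset$, $\phi(t\cdot s)=\phi(t)\cdot\phi(s)$, and $\phi(t\wedge s)=\phi(t)\cup\phi(s)$. By construction $\phi$ respects all four operational symbols, where I recall from Section~\ref{s:specific-algebraic} that $\{\lambda\}$ is the multiplicative identity and $\emptyset$ is the additive identity of $A^\square$. The central technical step is the identity
$$L\circ t \;=\; \bigcap_{u\in\phi(t)} u^{-1}L \qquad\text{for all } t\in F',\ L\subseteq A^*,$$
where the empty intersection is interpreted as $A^*$. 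The base cases and the $\wedge$-, $\top$-steps are immediate from~(\ref{e-monoidy}) and~(\ref{e-semiringy}). For the case $t=t_1\cdot t_2$ I need an auxiliary lemma: for every $s\in F'$ and any languages $M,N\subseteq A^*$, $(M\cap N)\circ s=(M\circ s)\cap(N\circ s)$, and in particular $A^*\circ s=A^*$. Both are routine structural inductions on $s$. Combining this distributivity with the basic identity $v^{-1}(u^{-1}L)=(uv)^{-1}L$ and the inductive hypothesis for $t_1$ and $t_2$ yields $L\circ(t_1\cdot t_2)=\bigcap_{w\in\phi(t_1)\phi(t_2)}w^{-1}L$, as required.

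Once the identity is in hand, the proposition follows quickly. If $\phi(t_1)=\phi(t_2)$, then $t_1\s t_2$ directly. Conversely, assume $t_1\s t_2$ and test with $L=\phi(t_1)$, viewed as a subset of $A^*$: since each $u\in\phi(t_1)$ belongs to $L$, we have $\lambda\in u^{-1}L$ for every such $u$, hence $\lambda\in L\circ t_1=L\circ t_2=\bigcap_{v\in\phi(t_2)}v^{-1}L$, forcing $\phi(t_2)\subseteq\phi(t_1)$; the symmetric choice $L=\phi(t_2)$ gives the reverse inclusion. Since $\phi$ is surjective (every finite subset of $A^*$ equals $\phi(u_1\wedge\cdots\wedge u_k)$ for appropriate words), the induced map $\bar\phi:F'/\rho^\square\to A^\square$ is a bijective homomorphism extending $a\rho^\square\mapsto a$, i.e.\ the asserted isomorphism. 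The step I expect to be most delicate is the multiplication case of the key identity, because one must combine distributivity of $\circ\,s$ over intersections with the identity $A^*\circ s=A^*$ to handle the degenerate situation where $\phi(t_1)$ (or $\phi(t_2)$) is empty.
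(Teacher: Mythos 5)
Your proof is correct, but it is organized differently from the paper's. The paper first verifies, one identity at a time, that the operations induced on $F'/\rho^\square$ satisfy all the idempotent-semiring laws (commutativity, associativity and idempotency of $\wedge$, the neutral and zero elements, both distributivity laws); from this it deduces that every class has a normal form $u_1\wedge\dots\wedge u_k$ with $u_i\in A^*$, i.e.\ a finite subset of $A^*$, and only then separates distinct normal forms. You instead introduce the evaluation map $\phi\colon F'\to A^\square$ up front and prove the single semantic identity $L\circ t=\bigcap_{u\in\phi(t)}u^{-1}L$ by structural induction; the computation of the kernel of $\phi$ together with surjectivity then gives the isomorphism, and the semiring laws for $F'/\rho^\square$ come for free by transport along it. The computational content is essentially the same: your auxiliary lemma $(M\cap N)\circ s=(M\circ s)\cap(N\circ s)$ is exactly the step the paper uses (without comment) when checking the right distributivity law, and your separating language $L=\phi(t_1)$ is the paper's $L=\{u_1,\dots,u_k\}$. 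What your version buys is that all the equational verifications are packaged into one induction and one explicit normal-form map; what it costs is having to state and prove the auxiliary distributivity lemma and to track the degenerate cases where $\phi(t_i)=\emptyset$ and the empty intersection equals $A^*$ --- which you correctly identify as the delicate point, and which your argument does handle (in the kernel computation with $\phi(t_1)=\emptyset$ one gets $L=\emptyset$, $L\circ t_1=A^*\ni\lambda$, while $v^{-1}\emptyset=\emptyset$ forces $\phi(t_2)=\emptyset$ as well). No gaps.
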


\begin{proof}
Let $u,v,w\in F'$. 
If $u \s v$ then, for each $L\subseteq A^*$, we have
$L\circ u=L\circ v$. 
We get $u\cdot w \s v\cdot w$
and $w\cdot u \s w\cdot v$ as in the case of
Proposition~\ref{p:free-monoid}.

Furthermore, 
$L\circ (u\wedge w) = (L\circ u)\cap (L \circ w) = 
(L\circ v)\cap (L\circ w) = L\circ (v\wedge w)$, 
which gives $u\wedge w \s v\wedge w$.
In the same way we can prove that
$w\wedge u \s w\wedge v$.
\ Thus $\s$ is a congruence relation on $F'$.

Now we show that $(F'/\rho^\square,\wedge,\top\rho^\square)$ is a commutative 
idempotent monoid with  the neutral element $\top\rho^\square$.
The commutativity and associativity of $\wedge$ is clear
as well as the fact that $\top\rho^\square$ is a neutral element for 
the operation $\wedge$. To show the idempotency of $\wedge$
notice that, for each $L\subseteq A^*$ and $u\in F'$, we have $L\circ(u\wedge u)
=(L\circ u)\cap (L\circ u)=L\circ u$.

The proof of the associativity of $\cdot$ on $F'/\rho^\square$ 
and the fact that $\lambda\rho^\square $ is a neutral element for 
the operation $\cdot$ is similar to that for monoids.
The fact that $\top\rho^\square$ is a zero element for $\cdot$ is clear.

Finally, we prove the distributivity laws.
Let $L\subseteq A^*$, $u,v,w\in F'$. Then
$L\circ (u\cdot(v\wedge w))
=(L\circ u)\circ(v\wedge w)
=(L\circ u)\circ v \cap (L\circ u)\circ w
=L\circ u\cdot v \cap L\circ u\cdot w
=L\circ(u\cdot v \wedge u\cdot w)$.
Similarly,
$L\circ((u\wedge v)\cdot w)
=(L\circ(u\wedge v))\circ w
=(L\circ u \cap L\circ v)\circ w
=(L\circ u)\circ w \cap (L\circ v)\circ w
=(L \circ u\cdot w)\cap (L\circ v\cdot w)
=L\circ (u\cdot w \wedge v\cdot w).$

We have proved that 
$F'/\rho^\square$ with the appropriate operations
is an idempotent semiring.
Therefore every element of $F'/\rho^\square$ can be represented 
by $u_1\wedge \dots \wedge u_k$ with $k\geq 0$
and $u_1,\dots,u_k\in A^*$.  
To get the unique representation of such element we use the 
idempotency and commutativity law and represent the element in
$F'/\rho^\square$ by the set 
$\{u_1,\dots,u_k\}$.
Having such two different sets 
$\{u_1,\dots,u_k\}$ and $\{v_1,\dots,v_\ell\}$, $\ell\geq 0,\, 
v_1,\dots,v_\ell\in A^*$, we show that 
they are not $\s$-related.
Indeed, put $L=\{u_1,\dots,u_k\}$. Then 
$\lambda\in L\circ \{u_1,\dots,u_k\} 
= u_1^{-1} L \cap \dots \cap u_k^{-1} L$ and
$\lambda\in L\circ \{v_1,\dots,v_\ell\}$ would give
$\{v_1,\dots,v_\ell\} \subsetneqq \{u_1,\dots,u_k\}$.
Take $L=\{v_1,\dots,v_\ell\}$ in this case.
\qed\end{proof}

\medskip
  
\noindent {\bf Lattice algebras}. Let $F''$ be the absolutely free 
algebra over $A$ with respect to the
 operational symbols $\cdot,\lambda,\, \wedge$, $\top$,
 binary $\vee$ and nullary $\perp$. 
 We use (\ref{e-monoidy}), (\ref{e-semiringy}) with
 $u,v\in F''$ 
 and
 \begin{equation}
  L \circ \perp=\emptyset,\ 
  L\circ (u \vee v)= (L\circ u) \cup (L\circ v) \text{ for } 
  u,v\in F''\ .
  \label{e-svazy}
  \end{equation}
  Again, it leads to certain identification of pairs of elements of
  $F''$, namely:
  for $u,v\in F''$, we put $u\mathrel{\rho^\diamond} v$ if and only if 
$(\ \forall\,  L\subseteq A^*) 
 \ L\circ u= L\circ v$.
 
\begin{proposition}\label{l:free lattice algebras}
The relation $\rho^\diamond$ 
 is a congruence relation on $F''$ and $F''/\rho^\diamond$ is isomorphic 
 to the free bounded distributive 
 lattice $A^\diamond$ over $A^*$ equipped 
 with multiplication satisfying 
 (\ref{e_nasobeni}), 
 via the extension of the mapping 
 $a\rho^\diamond \mapsto a,\ a\in A$.
\end{proposition}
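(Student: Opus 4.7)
The plan is to follow the three-stage pattern of Propositions~\ref{p:free-monoid} and~\ref{l:free semiring}: first verify that $\rho^\diamond$ is a congruence, then verify that $F''/\rho^\diamond$ satisfies the equational axioms of a bounded distributive lattice with a compatible multiplication, and finally establish canonical forms together with injectivity so as to identify this quotient with $A^\diamond$.

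For the congruence step, compatibility of $\rho^\diamond$ with $\cdot$ and $\wedge$ is a verbatim repetition of the computations in the two previous propositions; for $\vee$ I would use (\ref{e-svazy}): if $u\l v$ and $w\in F''$, then $L\circ(u\vee w)=(L\circ u)\cup(L\circ w)=(L\circ v)\cup(L\circ w)=L\circ(v\vee w)$, and symmetrically. Next I would check all the equational axioms---idempotency, commutativity and associativity of $\wedge$ and $\vee$; the absorption laws (e.g.\ $u\wedge(u\vee v)\l u$ from $L\cap(L\cup L')=L$); distributivity of $\wedge$ and $\vee$ over one another; neutrality of $\top$ for $\wedge$ and of $\perp$ for $\vee$; the monoid axioms for $\cdot$ with unit $\lambda$; and the identities (\ref{e_nasobeni})---each of which reduces to an elementary set-theoretic identity involving $\cap$, $\cup$ and left quotients. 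Together these laws allow me to rewrite every element of $F''/\rho^\diamond$ into the canonical form (\ref{e:incomparable-U}): push $\cdot$ inward using (\ref{e_nasobeni}) until it acts only on single words, gather the result as a disjunction of conjunctions of words, and discard comparable clauses by absorption. This yields a surjective structure-preserving map $A^\diamond\to F''/\rho^\diamond$ extending $a\mapsto a\rho^\diamond$.

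The main obstacle is injectivity: two distinct canonical elements $\mathcal{U}=\{U_1,\dots,U_k\}$ and $\mathcal{V}=\{V_1,\dots,V_\ell\}$, whose inner families are pairwise incomparable, must act differently on some language. The key observation, proved by induction on the canonical form, is the explicit formula
\[
L \circ \mathcal{U} \;=\; \bigcup_{i=1}^{k}\, \bigcap_{u\in U_i} u^{-1}L\,,
\]
so that $\lambda\in L\circ\mathcal{U}$ iff some $U_i\subseteq L$. Since $\mathcal{U}\neq\mathcal{V}$ in $A^\diamond$, the upper sets they generate in $(\mathcal{P}(A^*),\subseteq)$ differ, so WLOG there is some $U_i$ not above any $V_j$; choosing $L=U_i$ produces $\lambda\in L\circ\mathcal{U}$ (witnessed by $U_i\subseteq U_i$) while $\lambda\notin L\circ\mathcal{V}$. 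The boundary case where $\mathcal{U}=\perp$ and $\mathcal{V}\neq\perp$ is handled separately: $L\circ\perp=\emptyset$ for every $L$, whereas any inner set $V_j$ of $\mathcal{V}$ witnesses $\lambda\in A^*\circ\mathcal{V}$.
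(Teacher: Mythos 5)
Your proposal is correct and follows essentially the same route as the paper: establish the congruence property (with the extra $\vee$-compatibility check), verify the equational laws so as to reduce every term to the disjunctive-normal canonical form (\ref{e:incomparable-U}), and then separate two distinct canonical forms by acting on a well-chosen finite language. The only (harmless) difference is in the injectivity step: you select at the outset an inner set $U_i$ lying above no $V_j$, which collapses into one step the paper's two-case argument (first $L=U_i$, then $L=V_j$ when some $V_j\subseteq U_i$); your explicit formula $L\circ\mathcal U=\bigcup_i\bigcap_{u\in U_i}u^{-1}L$ is left implicit in the paper but is exactly what its argument uses.
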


\begin{proof}
Let $u,v,w\in F''$. 
If $u \l v$ then, for each $L\subseteq A^*$, we have
$L\circ u=L\circ v$. 
We get $u\cdot w \l v\cdot w$,
$w\cdot u \l w\cdot v$, 
$u\wedge w \l v\wedge w$,
$w\wedge u \l w\wedge v$
as in the case of Proposition~\ref{l:free semiring}.
Furthermore, 
$L\circ (u\vee w) = (L\circ u)\cup (L \circ w) = 
(L\circ v)\cup (L\circ w) = L\circ (v\vee w)$, 
which gives $u\vee w \l v\vee w$.
In the same way we can prove that
$w\vee u \l w\vee v$.
Thus $\l$ is a congruence relation on $F''$.

Now we state the properties of operations $\wedge,\vee,\cdot,
\top, \perp$ and $\lambda$ on $F''/\rho^\diamond$.
Proofs of all statements are straightforward and therefore omitted.
The operation $\wedge$ is commutative, associative and idempotent,
$\top$ is the neutral element and $\perp$ is the zero.
The operation $\vee$ is commutative, associative and idempotent,
$\perp$ is the neutral element and $\top$ is the zero.
The operations $\wedge$ and $\vee$ are connected by the
distributivity laws.
The operation $\cdot$ is associative, $\lambda$ is 
the neutral element, $\top$ and $\perp$ are right zeros, 
and $\top\cdot a \l \top$, $\perp\cdot a \l \perp$
for all $a\in A$.
Finally, the distributivity $u\cdot (v \vee w)=u\cdot v \vee u\cdot w$
holds for arbitrary $u,v,w\in F''$
and the distributivity $(u\vee v)\cdot w=u\cdot w \vee v\cdot w$ for 
$u,v\in F''$ and $w\in A^*$.
Similarly for the operation $\wedge$.

We have proved that every element of $F''/\rho^\diamond$ can be 
represented 
as $$(u_{1,1}\wedge\dots\wedge u_{1,r_1})\vee\dots\vee
 (u_{k,1}\wedge\dots \wedge u_{k,r_k})\, ,$$ where
$k,r_1,\dots ,r_k \ge 0$ and 
$u_{i,j}\in A^*$ for all $i=1,\dots,k,\ j=1,\dots,r_i$.
(Here $k=0$ corresponds to the element $\perp$
and $k=1$, $r_1=0$ corresponds to the element $\top$.)
Using the idempotency and commutativity of $\wedge$ and $\vee$
we can write such element even as
$\{\{u_{1,1},\dots,u_{1,r_1}\},\dots,\{u_{k,1},\dots,
 u_{k,r_k}\}\}$.
To get canonical forms remove the richer one from each pair
of comparable inner sets.

Let $\mathcal U=\{U_1,\dots,U_k\}$ and $\mathcal V=\{V_1,\dots,V_\ell\}$
be different canonical forms. 
We show that ${\mathcal U}$ and ${\mathcal V}$ represent elements of $F''$ which are not
$\l$-related. If $U_i\not\in {\mathcal V}$, take $L=U_i$. Then $\lambda\in L\circ {\mathcal U}$
and $\lambda\in L\circ {\mathcal V}$ would give that $V_j\subseteq U_i$
for some $V_j\in \cal V$ and we can 
take $L=V_j$. Therefore $F''/\rho^\diamond$ is isomorphic
to $A^\diamond$.
\qed\end{proof}

\begin{example}
The distributivity (\ref{e_nasobeni}) is not 
true for $w\in A^\diamond$ in general. Indeed, let $a,b\in A$ 
be different and let $L=\{aa,bb\}$. Then $\lambda\in L\circ 
(a\cdot (a\vee b)\wedge b\cdot(a\vee b))$ but
$L\circ ((a\wedge b)\cdot (a\vee b))= \emptyset$.
\end{example}

\section{Canonical Automata}
\label{s:canonical}

In each level, we consider the canonical finite automaton of a 
given regular language. 
To show examples of three types of automata, we
consider the language $L=a^+b^+$ over the alphabet $A=\{a,b\}$.

\medskip

\noindent {\bf Monoids.}
We considered the structure $(2^{A^*},A,\circ)$ defined 
by~(\ref{e-monoidy}). It is called here the {\it canonical 
 semiautomaton} on $A$.
 Given a regular language $L$ over $A$, 
 we can generate a subsemiautomaton by $L$ 
 in  $(2^{A^*},A,\circ)$ called  the 
 {\it canonical finite semiautomaton} 
 of $L$; namely
 $$\mathcal D_L = (\,\{\ L\circ u\, 
 \mid \, u\in A^*\, \}, A, \circ\ )\, .$$ 
 It is really finite due to
 Proposition~\ref{p:brzoz}.
 Notice that $L\circ u = u^{-1}L$ for all $u \in A^*$.
 Taking $L$ as the unique
 initial state and $T=\{\, L\circ u\, \mid\, \lambda\in L\circ u\, \}$ 
 as the set of all
 final states, we get the {\it canonical finite automaton} of $L$.

 \begin{proposition}[\cite{yu}]\label{p:brzoz}
 Given a regular language $L$ over the alphabet $A$, the automaton
 $\mathsf D_L=(\,\{u^{-1}L \mid u\in A^*\},A,\circ,L,T\,)$ is finite and accepts $L$.
 \end{proposition}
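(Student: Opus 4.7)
The plan is to establish the two conclusions separately: finiteness of the state set, and correctness of the accepted language. Both are well-known consequences of the Brzozowski derivative construction, but it is worth spelling out how they follow in the present notation.

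First, I would prove finiteness by transporting an arbitrary DFA recognition of $L$ to a bound on the number of quotients. Fix some deterministic finite automaton $(Q,A,\delta,q_0,F)$ accepting $L$, and let $\delta^*\!:Q\times A^*\to Q$ be the usual extension. The key observation is that, for any $u\in A^*$,
\[
u^{-1}L \;=\; \{\,v\in A^*\mid \delta^*(q_0,uv)\in F\,\}\;=\;\{\,v\in A^*\mid \delta^*(\delta^*(q_0,u),v)\in F\,\},
\]
so $u^{-1}L$ depends only on the state $\delta^*(q_0,u)$. Hence the assignment $u\mapsto u^{-1}L$ factors through $u\mapsto \delta^*(q_0,u)$, and its image has cardinality at most $|Q|<\infty$. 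Thus $\{\,u^{-1}L\mid u\in A^*\,\}$ is finite, which gives finiteness of $\mathsf D_L$.

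Second, I would verify that $\mathsf D_L$ indeed recognizes $L$. Here I would first check by induction on the length of $u\in A^*$ that $L\circ u = u^{-1}L$: the base case $u=\lambda$ and the single-letter case come directly from (\ref{e-monoidy}), and the inductive step uses $L\circ(u\cdot a) = (L\circ u)\circ a = a^{-1}(u^{-1}L) = (ua)^{-1}L$. Given this identification, the transition structure is well-defined: starting in state $L$ and reading $u$ lands in state $L\circ u = u^{-1}L$. Then $u\in L(\mathsf D_L)$ iff $L\circ u\in T$ iff $\lambda\in L\circ u = u^{-1}L$ iff $u\in L$, completing the proof.

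The main step requiring care is the interplay between the symbolic action $\circ$ from Section~\ref{s:transformation} and the set-theoretic quotient operation $u\mapsto u^{-1}L$; once the identity $L\circ u = u^{-1}L$ is in place, both finiteness and correctness reduce to routine verifications. Finiteness itself is not really an obstacle but rather a direct consequence of borrowing any existing finite recognizer for $L$—which is natural since "regular" is typically defined via the existence of such a recognizer.
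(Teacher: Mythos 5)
Your proof is correct. Note, however, that the paper does not prove this proposition at all: it is quoted as a known result with a citation to Yu's Handbook chapter, so there is no in-paper argument to compare against. Your two-part argument --- bounding the number of quotients $u^{-1}L$ by the number of states of any DFA recognizing $L$, and identifying $L\circ u$ with $u^{-1}L$ by induction so that acceptance reduces to $\lambda\in u^{-1}L \Leftrightarrow u\in L$ --- is the standard proof of the Brzozowski derivative construction, and the induction step $a^{-1}(u^{-1}L)=(ua)^{-1}L$ is exactly the point that needs checking; you have it right.
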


\begin{example}
In the canonical finite automaton $\mathsf{D}_L$ of the language
$L=a^+b^+$, we have four states 
$L=a^+b^+$, $K=a^{-1}L=a^*b^+$, $b^{-1}L=\emptyset$ and $b^{-1}K=b^*$.
There is just one state containing the empty word, namely 
the state $b^*$.  Thus $T=\{b^*\}$.
The automaton is depicted on Figure~\ref{f:automaton}.
\begin{figure}[ht]
\begin{center}
\begin{picture}(40,41)(0,-22)
    \unitlength=.5\unitlength
    \node[NLangle=0.0,Nmarks=i,iangle=180](n0)(0,0){$L$}
    \node[NLangle=0.0](n2)(25,25){$K$}
    \node[NLangle=0.0](n1)(25,-25){$\emptyset$}
    \node[NLangle=0.0](n6)(50,0){$b^*$}
\drawloop[loopangle=-30](n1){$a$}
\drawloop[loopangle=210](n1){$b$}
\drawloop[loopangle=30](n2){$a$}
\drawloop[loopangle=0](n6){$b$}
\drawedge(n0,n1){$b$}
\drawedge(n0,n2){$a$}
\drawedge(n2,n6){$b$}
\drawedge(n6,n1){$a$}
\rmark(n6)
\end{picture}
\caption{The canonical finite automaton of the language
$L=a^+b^+$.}
\label{f:automaton}
\end{center}
\end{figure}
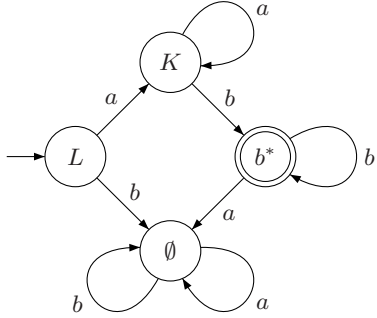
\vskip -0.8cm
\end{example}

\noindent {\bf Semirings.}
The structure $(2^{A^*},A,\circ,\cap)$ forms 
the {\it canonical meet semiautomaton}  on $A$. 
Moreover, given a regular language $L$ over $A$, we can generate 
 by $L$  in  $(2^{A^*},A,\circ,\cap)$  the 
 {\it canonical finite meet semiautomaton} 
 of $L$; namely
 $$\mathcal M_L=(\,\{\, L\circ U  \mid \, 
 U\in A^\square\ \}, A, \circ, \cap \, )\, .$$ 
 Taking $L$ as the unique
 initial state and  
 all states containing $\lambda$ as the set of all
 final states, we get the {\it canonical finite meet automaton} $\mathsf M_L$
 of $L$. 
 
 \begin{example}
To construct the canonical finite meet automaton $\mathsf{M}_L$ 
of the language $L=a^+b^+$ we need to consider
all possible intersections of states from 
$\mathcal{D}_L$.
There are two new states: the intersection $K\cap b^*=b^+$ 
and the intersection of the empty
system $\bigcap_{\emptyset}=A^*$.
The canonical finite meet automaton 
is depicted on Figure~\ref{f:meetautomaton}.
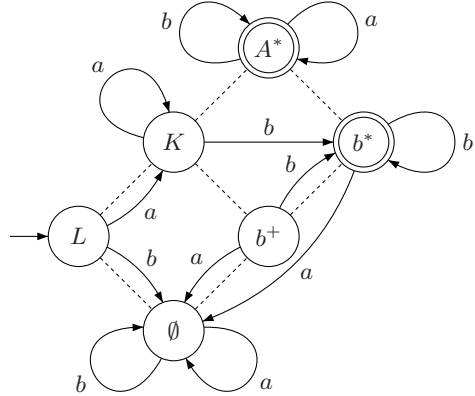
\begin{figure}[ht]
\begin{center}
  \begin{picture}(70,56)(-14,-22)
    \unitlength=.5\unitlength
\node[NLangle=0.0,Nmarks=i,iangle=180](n0)(0,0){$L$}
   \node[NLangle=0.0](n2)(25,25){$K$}
    \node[NLangle=0.0](n3)(50,0){$b^+$}
    \node[NLangle=0.0](n1)(25,-25){$\emptyset$}
    \node[NLangle=0.0](n5)(50,50){$A^*$}
    \node[NLangle=0.0](n6)(75,25){$b^*$}
\drawloop[loopangle=-30](n1){$a$}
\drawloop[loopangle=210](n1){$b$}
\drawloop[loopangle=15](n5){$a$}
\drawloop[loopangle=165](n5){$b$}
\drawloop[loopangle=135](n2){$a$}
\drawloop[loopangle=0](n6){$b$}
\drawedge[curvedepth=5](n0,n1){$b$}
\drawedge[curvedepth=-5,ELside=r](n0,n2){$a$}
\drawedge(n2,n6){$b$}
\drawedge[curvedepth=-5,ELside=r](n3,n1){$a$}
\drawedge[curvedepth=5](n3,n6){$b$}
\drawedge[curvedepth=10](n6,n1){$a$}
\drawedge[AHnb=0,dash={1 1}0](n0,n1){}
\drawedge[AHnb=0,dash={1 1}0](n0,n2){}
\drawedge[AHnb=0,dash={1 1}0](n1,n3){}
\drawedge[AHnb=0,dash={1 1}0](n2,n3){}
\drawedge[AHnb=0,dash={1 1}0](n2,n5){}
\drawedge[AHnb=0,dash={1 1}0](n3,n6){}
\drawedge[AHnb=0,dash={1 1}0](n5,n6){}
\rmark(n6)
\rmark(n5)
\end{picture}
\caption{The canonical finite meet automaton of the language
$L=a^+b^+$.}
\label{f:meetautomaton}
\end{center}
\vskip -0.6cm
\end{figure}

\noindent
Dashed lines indicate the inclusion relation on the set of all states. The inclusion relation
completely describes a semilattice structure of the meet automaton $\mathsf{M}_L$.
\end{example}

\noindent {\bf Lattice algebras}.
The structure $(2^{A^*},A,\circ,\cap,\cup)$ forms the 
{\it canonical lattice semiautomaton} 
 on $A$. Moreover, given a regular language $L$ over $A$, we can 
 generate 
 by $L$  in  $(2^{A^*},A,\circ,\cap,\cup)$  the 
 {\it canonical finite lattice semiautomaton} 
 of $L$; namely 
 $$\mathcal L_L=(\,\{\,L\circ {\mathcal U} \mid 
 {\mathcal U} \in A^\diamond\}, A,\circ,
 \cap,\cup\,).$$
 This structure is already mentioned in Kl\'ima~\cite{klima}.
 Taking $L$ as the unique
 initial state and 
 all states containing $\lambda$ as the set of all
 final states, we get the {\it canonical finite lattice 
 automaton} $\mathsf L_L$ of $L$.

\begin{example} 
We consider the canonical finite lattice automaton $\mathsf{L}_L$ 
of the language
$L=a^+b^+$, which is depicted on Figure~\ref{f:latticeautomaton}.
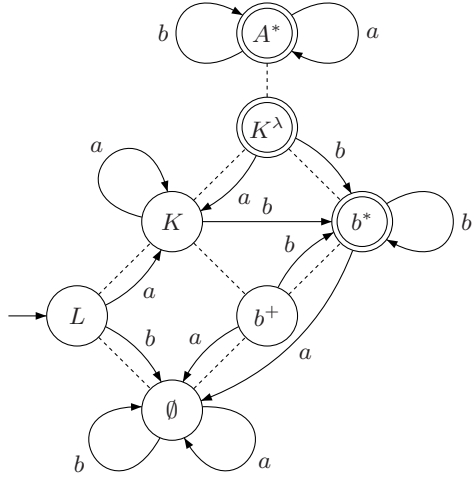
\begin{figure}[ht]
\begin{center}
  \begin{picture}(80,68)(-15,-24)
    \unitlength=.5\unitlength
\node[NLangle=0.0,Nmarks=i,iangle=180](n0)(0,0){$L$}
    \node[NLangle=0.0](n2)(25,25){$K$}
    \node[NLangle=0.0](n3)(50,0){$b^+$}
    \node[NLangle=0.0](n1)(25,-25){$\emptyset$}
\node[NLangle=0.0](n4)(50,50){$K^\lambda$}
\node[NLangle=0.0](n5)(50,75){$A^*$}
\node[NLangle=0.0](n6)(75,25){$b^*$}
\drawloop[loopangle=-30](n1){$a$}
\drawloop[loopangle=210](n1){$b$}
\drawloop[loopangle=0](n5){$a$}
\drawloop[loopangle=180](n5){$b$}
\drawloop[loopangle=135](n2){$a$}
\drawloop[loopangle=0](n6){$b$}
\drawedge[curvedepth=5](n0,n1){$b$}
\drawedge[curvedepth=-5,ELside=r](n0,n2){$a$}
\drawedge(n2,n6){$b$}
\drawedge[curvedepth=-5,ELside=r](n3,n1){$a$}
\drawedge[curvedepth=5](n4,n2){$a$}
\drawedge[curvedepth=5](n4,n6){$b$}
\drawedge[curvedepth=5](n3,n6){$b$}
\drawedge[curvedepth=10](n6,n1){$a$}
\drawedge[AHnb=0,dash={1 1}0](n0,n1){}
\drawedge[AHnb=0,dash={1 1}0](n0,n2){}
\drawedge[AHnb=0,dash={1 1}0](n1,n3){}
\drawedge[AHnb=0,dash={1 1}0](n2,n3){}
\drawedge[AHnb=0,dash={1 1}0](n2,n4){}
\drawedge[AHnb=0,dash={1 1}0](n3,n6){}
\drawedge[AHnb=0,dash={1 1}0](n4,n6){}
\drawedge[AHnb=0,dash={1 1}0](n4,n5){}
\rmark(n6)
\rmark(n5)
\rmark(n4)
\end{picture}
\caption{The canonical finite lattice automaton of the language
$L=a^+b^+$.}
\label{f:latticeautomaton}
\end{center}
\vskip -0.3cm
\end{figure}
There is only one new state, namely $K^\lambda=K\cup b^*=K\cup\{\lambda\}$
in addition to the canonical finite meet automaton $\mathsf{M}_L$.
Now, the inclusion relation describes a lattice structure of $\mathsf{L}_L$.
\end{example}

\section{Syntactic structures}
\label{s:syntactic}

The basic tool of the algebraic language theory is the concept
of the syntactic monoid of a regular language. It is a certain finite quotient
of the free monoid on the corresponding alphabet. We recall here
its definition and its construction.
Then we consider modifications for the remaining two levels.
\smallskip

\noindent {\bf Monoids.} Given a regular language $L$ over the alphabet $A$, 
we define the
 {\it syntactic congruence} 
$\sim^*_L$ of $L$ on $A^*$ as follows:
 for $u,v\in A^*$, put $u\sim^*_L v$  if and only if 
 $$ (\, \forall\, p,q\in A^*)\ (\ puq\in L 
 \Longleftrightarrow pvq\in L\ )\,.$$ 
 
 The following is a folklore result.

\begin{proposition}\label{syntactic monoid} 
The relation $\sim^*_L$ is a congruence relation on $A^*$.
Moreover,
for $u,v\in A^*$, we have that $u\sim^*_L v$ if and only if\ 
$$(\, \forall\ p\in A^*)
\ (p^{-1}L) \circ u = (p^{-1}L)\circ v\, .$$ Therefore, the 
structure $A^*/\!\sim^*_L$ 
is isomorphic to the transformation monoid of the canonical finite 
semiautomaton
of $L$.
\end{proposition}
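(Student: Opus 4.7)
The plan is to prove the three claims in sequence, with the middle (reformulation) clause doing most of the work.

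First, for the congruence claim, I would unwind the definition: given $u \sim^*_L v$ and any $w \in A^*$, to show $uw \sim^*_L vw$ I fix arbitrary $p, q \in A^*$ and observe that $p(uw)q \in L \Leftrightarrow pu(wq) \in L \Leftrightarrow pv(wq) \in L \Leftrightarrow p(vw)q \in L$, using the hypothesis with test words $p$ and $wq$. The analogous argument with test words $pw$ and $q$ handles $wu \sim^*_L wv$. Reflexivity, symmetry, transitivity are immediate, so $\sim^*_L$ is a congruence.

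The key step is the reformulation. Here I would exploit Proposition~\ref{p:brzoz}, which gives $L \circ u = u^{-1}L$ for $u \in A^*$, combined with the inductive rule $L \circ (u \cdot v) = (L \circ u) \circ v$ from (\ref{e-monoidy}). This yields the identity $(p^{-1}L) \circ u = u^{-1}(p^{-1}L) = (pu)^{-1}L$ for all $p, u \in A^*$. Hence the condition $(p^{-1}L) \circ u = (p^{-1}L) \circ v$ for all $p$ is equivalent to $(pu)^{-1}L = (pv)^{-1}L$ for all $p$, which in turn unfolds to $(\forall p, q \in A^*)\,(puq \in L \Leftrightarrow pvq \in L)$, i.e.\ to $u \sim^*_L v$. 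This is essentially the only place where something needs to be checked carefully, but it is a bookkeeping exercise rather than a genuine obstacle.

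For the final isomorphism claim, let $\mathcal D_L$ have state set $Q = \{p^{-1}L \mid p \in A^*\}$. Each $u \in A^*$ induces a transformation $\tau_u : Q \to Q$ defined by $\tau_u(p^{-1}L) = (p^{-1}L) \circ u$, and the map $u \mapsto \tau_u$ is a monoid homomorphism from $A^*$ to the transformation monoid $T(\mathcal D_L)$ because $\tau_{uv} = \tau_v \circ \tau_u$ by associativity of $\circ$ on terms. It is surjective by definition of the transformation monoid generated by the letters. Its kernel is exactly $\sim^*_L$ by the reformulation just proved: $\tau_u = \tau_v$ says $(p^{-1}L) \circ u = (p^{-1}L) \circ v$ for every state $p^{-1}L \in Q$. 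Therefore $A^*/\!\sim^*_L \cong T(\mathcal D_L)$, which completes the proof.

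I expect the only subtle point to be clean use of Proposition~\ref{p:brzoz} together with the action rules so as to legitimately rewrite $(p^{-1}L)\circ u$ as $(pu)^{-1}L$; once that identity is in hand, everything else reduces to unwinding definitions.
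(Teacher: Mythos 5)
Your proof is correct and follows essentially the same route as the paper's: the heart in both cases is the chain of reformulations $u\sim^*_L v \Leftrightarrow (\forall p\in A^*)\,(pu)^{-1}L=(pv)^{-1}L \Leftrightarrow (\forall p\in A^*)\,(p^{-1}L)\circ u=(p^{-1}L)\circ v$, and you merely spell out the congruence check and the kernel argument for the isomorphism, both of which the paper leaves implicit. One small attribution slip: the identity $L\circ u=u^{-1}L$ is a remark obtained by induction from~(\ref{e-monoidy}) rather than the content of Proposition~\ref{p:brzoz} (which asserts finiteness and acceptance), but this does not affect the argument.
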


We present here a proof since it is a suitable preparation
for similar results in the next levels.

\begin{proof}
Clearly, the relation $\ml$ is reflexive, symmetric and transitive.
Furthermore, for $u,v,w\in A^*$, if $u \ml v$ 
then $uw \ml vw$ and $wu \ml wv$.
Clearly, the fact $u\sim^*_L v$ is equivalent to
$(\, \forall\ p,q\in A^*)\, (\ q\in (pu)^{-1} L 
 \Longleftrightarrow q\in (pv)^{-1}L\ )$,
 which is
 $(\, \forall\ p\in A^*)\, (pu)^{-1} L =(pv)^{-1}L$,
 that is
 $(\, \forall\ p\in A^*)\, (p^{-1} L)\circ u = (p^{-1}L)\circ v$.
\qed\end{proof}

The structure $A^*/\!\sim^*_L$ is called
the {\it syntactic monoid} of $L$.
\smallskip

\noindent
{\bf Semirings.} 
Given a regular language $L$ over the alphabet $A$, 
we define the
{\it syntactic (semiring) congruence} 
$\sim^\square_L$ of $L$ on the semiring  $A^\square$ as follows:
 for $U=\{u_1,\dots,u_k\}, V=\{v_1,\dots,v_\ell\}
 \in  A^\square$, we put
 $U\sim^\square_L V$ if and only if  
 $$(\ \forall\ p,q\in A^*)\  
 (\ pu_1q\in L,\dots, pu_kq\in L \Longleftrightarrow pv_1q\in L,
 \dots,pv_\ell q\in L\  )\,.$$ 

\begin{proposition}[\cite{lp-syntactic}]
\label{syntactic semiring}
The relation $\sim^\square_L$ is a congruence relation on 
$A^\square$.
Moreover,
for $U,V\in A^\square$, we have that $U\sim^\square_L V$ if and only if
$$  
(\, \forall\ p\in A^*)\ (p^{-1}L) \circ U = (p^{-1}L)\circ V\,.$$
\end{proposition}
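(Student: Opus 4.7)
The plan is to parallel the proof of Proposition~\ref{syntactic monoid}, adapting each step to accommodate the additional operation $\wedge$ on $A^\square$, and then read off the ``moreover'' part from the explicit formula for $L\circ U$.

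First I would verify that $\sim^\square_L$ is an equivalence relation; reflexivity, symmetry, and transitivity are immediate from the defining biconditional. For compatibility with the multiplication, suppose $U=\{u_1,\dots,u_k\}\sim^\square_L V=\{v_1,\dots,v_\ell\}$ and let $W=\{w_1,\dots,w_m\}\in A^\square$. Then $U\cdot W=\{u_iw_j\mid i\le k,\, j\le m\}$, and for fixed $p,q\in A^*$ and fixed $j$, setting $q'=w_jq$ turns the condition ``$pu_i w_jq\in L$ for all $i$'' into ``$pu_iq'\in L$ for all $i$'', which by $U\sim^\square_L V$ is equivalent to ``$pv_iq'\in L$ for all $i$''. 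Ranging over all $j$ gives $U\cdot W\sim^\square_L V\cdot W$. Left-multiplication is symmetric, using $p'=pw_j$. For compatibility with $\wedge$, recall that in $A^\square$ the meet is set union, so $U\wedge W=\{u_1,\dots,u_k,w_1,\dots,w_m\}$; the defining condition for $U\wedge W\sim^\square_L V\wedge W$ simply adjoins the same new conjuncts involving the $w_j$'s on both sides of the biconditional, so the equivalence passes through unchanged.

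For the ``moreover'' statement, I would first record the formula $L\circ U=u_1^{-1}L\cap\cdots\cap u_k^{-1}L$, which is immediate from the inductive clause $L\circ(u\wedge v)=(L\circ u)\cap(L\circ v)$ in~(\ref{e-semiringy}). Consequently $(p^{-1}L)\circ U=\bigcap_i (pu_i)^{-1}L$ and likewise for $V$. By the definition of set equality, $(p^{-1}L)\circ U=(p^{-1}L)\circ V$ is equivalent to saying that for every $q\in A^*$, $q\in\bigcap_i(pu_i)^{-1}L$ iff $q\in\bigcap_j(pv_j)^{-1}L$; unfolding the quotients yields ``$pu_iq\in L$ for all $i$ iff $pv_jq\in L$ for all $j$''. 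Quantifying over $p$ reproduces the definition of $\sim^\square_L$.

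The only genuine bookkeeping concern is keeping the two universal quantifiers over $p$ and $q$ straight when handling the multiplication case, since the definition of $\sim^\square_L$ bundles several conjuncts indexed by $i$ on each side of the biconditional; once one fixes either $q':=w_jq$ or $p':=pw_j$ and treats the other parameter as free, everything is routine. Nothing new is needed beyond the semilattice identity $L\circ U=\bigcap_i u_i^{-1}L$ and the definition of the product in $A^\square$ as set product.
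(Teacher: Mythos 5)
Your proposal is correct and follows essentially the same route as the paper: the ``moreover'' part is obtained by exactly the same unfolding of $(p^{-1}L)\circ U=\bigcap_i (pu_i)^{-1}L$ into the defining biconditional of $\sim^\square_L$. The only difference is that you spell out the congruence verification (substituting $q'=w_jq$ resp.\ $p'=pw_j$ and adjoining common conjuncts for $\wedge$), which the paper dismisses as ``easy and similar to the case of monoids''; your details are sound.
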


\begin{proof} 
To show that the relation $\sl$ is a congruence relation on
$A^\square$ is easy and similar to the case of monoids.
Clearly, the fact $U\sl V$ is equivalent to
$$(\, \forall\, p,q\in A^*) \ q\in (pu_1)^{-1} L \cap \dots \cap
(pu_k)^{-1} L
 \Longleftrightarrow q\in (pv_1)^{-1}L  \cap \dots \cap
 (pv_\ell)^{-1}L\,.$$
 The last formula can be written as
 $$(\,\forall\ p\in A^*)\, (pu_1)^{-1} L \cap \dots \cap (pu_k)^{-1} L
 =(pv_1)^{-1} L \cap \dots \cap (pv_\ell)^{-1} L\,,$$
 which is 
 $(\,\forall\ p\in A^*) \ p^{-1} L\circ U =p^{-1}L\circ V$.
\qed\end{proof}

Note that  one can show (see~\cite{lp-examples})
that the structure $A^\square/\!\sim_L^\square$ is isomorphic
to  the transformation semiring of the whole canonical finite meet 
semiautomata $\mathsf M_L$ of $L$. 
The structure $A^\square/\!\sim_L^\square$
is called the {\it syntactic semiring} of $L$.
\medskip

Numerous examples of syntactic semirings can be found e.g. in~\cite{lp-examples}.
In~\cite{lp-syntactic}
it is described how one can compute the 
syntactic semiring algorithmically from the 
syntactic monoid. 
For the handmade computations we can use Proposition~\ref{syntactic semiring}.

\begin{example} 
Consider again the language
$L=a^+b^+$. We can choose the words $\lambda$, $a$, $b$, $ab$, and $ba$ 
to represent five different transformations. There are no others, because 
both $a$ and $b$ are idempotent elements of both syntactic monoid and syntactic 
semiring and $ba$ is a zero element. Moreover, $ba$ is
the smallest element in the syntactic semiring, because $ba$ transforms
all states, with exception of $A^*$, to the state $\emptyset$. So,
if we want to compute all elements of the syntactic
semiring, it is enough to consider only intersections of the elements
$\lambda$, $a$, $b$ and $ab$. The crucial observation is that both $\lambda\wedge ab$ and 
$a\wedge b$ give the same transformation 
as well as the intersection of any triple of elements.
Hence in the syntactic semiring there are, besides the element $\top$ and elements
$\lambda$, $a$, $b$, $ab$, and $ba$, just five elements given by intersections
$\lambda \wedge a$, $\lambda \wedge b$, $\lambda\wedge ab$, $ a\wedge ab$ and $b\wedge ab$.
In Table~\ref{t:syntactic-semiring} 
\begin{table}[ht]
\begin{center}
\begin{tabular}{c|ccc|ccc}
&$L$&$K$&$b^*$&$b^+$&$A^*$&$\emptyset$\\
\hline
$\lambda$&$L$&$K$&$b^*$&$b^+$&$A^*$&$\emptyset$\\
$a$&$K$&$K$&$\emptyset$&$\emptyset$&$A^*$&$\emptyset$\\
$b$&$\emptyset$&$b^*$&$b^*$&$b^*$&$A^*$&$\emptyset$\\
$ab$&$b^*$&$b^*$&$\emptyset$&$\emptyset$&$A^*$&$\emptyset$\\
$ba$&$\emptyset$&$\emptyset$&$\emptyset$&$\emptyset$&$A^*$&$\emptyset$\\
\hline
$\top$ &$A^*$&$A^*$&$A^*$&$A^*$&$A^*$&$A^*$\\
$\lambda\wedge a$&$L$&$K$&$\emptyset$&$\emptyset$&$A^*$&$\emptyset$\\
$\lambda\wedge b$&$\emptyset$&$b^+$&$b^*$&$b^+$&$A^*$&$\emptyset$\\
$\lambda \wedge ab$&$\emptyset$&$b^+$&$\emptyset$&$\emptyset$&$A^*$&$\emptyset$\\
$a\wedge ab$&$b^+$&$b^+$&$\emptyset$&$\emptyset$&$A^*$&$\emptyset$\\
$b\wedge ab$&$\emptyset$&$b^*$&$\emptyset$&$\emptyset$&$A^*$&$\emptyset$\\
\hline
\end{tabular}
\end{center}
\caption{The transformations of $\mathcal{M}_L$ for the language $L=a^+b^+$.}
\label{t:syntactic-semiring}
\end{table}
we present how all these elements transform
the canonical finite meet automaton. The semilattice part of the syntactic semiring
is fully described by Figure~\ref{f:syntactic-semiring}.
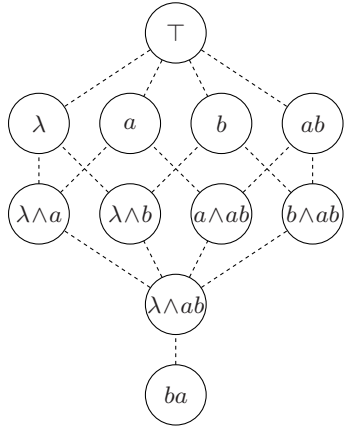
\begin{figure}[ht]
\begin{center}
  \begin{picture}(72,60)(-14,-5)
    \unitlength=.5\unitlength
\node[NLangle=0.0](p1)(0,48){{\footnotesize $\lambda\!\wedge\! a$}}
   \node[NLangle=0.0](p2)(24,48){{\footnotesize $\lambda\!\wedge\! b$}}
    \node[NLangle=0.0](p3)(48,48){{\footnotesize $a\!\wedge\! ab$}}
    \node[NLangle=0.0](p4)(72,48){{\footnotesize $b\! \wedge\! ab$}}
    \node[NLangle=0.0](d1)(0,72){{\footnotesize $\lambda$}}
   \node[NLangle=0.0](d2)(24,72){{\footnotesize $a$}}
    \node[NLangle=0.0](d3)(48,72){{\footnotesize $b$}}
    \node[NLangle=0.0](d4)(72,72){{\footnotesize $ab$}}
    \node[NLangle=0.0](v)(36,96){{\footnotesize $\top$}}
    \node[NLangle=0.0](s2)(36,24){{\footnotesize $\lambda\!\wedge\!  ab$}}
    \node[NLangle=0.0](s1)(36,0){{\footnotesize $ba$}}
\drawedge[AHnb=0,dash={1 1}0](s1,s2){}
\drawedge[AHnb=0,dash={1 1}0](s2,p1){}
\drawedge[AHnb=0,dash={1 1}0](s2,p2){}
\drawedge[AHnb=0,dash={1 1}0](s2,p3){}
\drawedge[AHnb=0,dash={1 1}0](s2,p4){}
\drawedge[AHnb=0,dash={1 1}0](v,d1){}
\drawedge[AHnb=0,dash={1 1}0](v,d2){}
\drawedge[AHnb=0,dash={1 1}0](v,d3){}
\drawedge[AHnb=0,dash={1 1}0](v,d4){}
\drawedge[AHnb=0,dash={1 1}0](p1,d1){}
\drawedge[AHnb=0,dash={1 1}0](p1,d2){}
\drawedge[AHnb=0,dash={1 1}0](p4,d3){}
\drawedge[AHnb=0,dash={1 1}0](p4,d4){}
\drawedge[AHnb=0,dash={1 1}0](p2,d1){}
\drawedge[AHnb=0,dash={1 1}0](p2,d3){}
\drawedge[AHnb=0,dash={1 1}0](p3,d2){}
\drawedge[AHnb=0,dash={1 1}0](p3,d4){}
\end{picture}
\caption{The semilattice order of the syntactic semiring of the language $L=a^+b^+$.}
\label{f:syntactic-semiring}
\end{center}
\end{figure}
Notice that for the computation of the syntactic semiring we do not 
need to know all the information from Table~\ref{t:syntactic-semiring}.
For example, if a term $U\in F'$ acts on the state $b^+$, then the image is
the intersection of images of the states $K$ and $b^*$.
Moreover, the images of states $\emptyset$ and $A^*$ are clear. 
Thus we need to work only with first three columns. 
\end{example}

\noindent {\bf Lattice algebras}. 
Given a regular language $L$ over the alphabet $A$, 
we define the
so-called {\it syntactic (lattice) congruence} 
$\sim^\diamond_L$ of $L$ on $A^\diamond$ as follows:
for 
${\mathcal U}=\{U_1,\dots,U_k\},{\mathcal V}=\{V_1,\dots,V_\ell\}
\in A^\diamond$ we put
$\mathcal U\sim_L^\diamond \mathcal V$ if and only if,  
for every $ p,q\in A^*$, the condition
$$
  pU_1q\subseteq L \text{ or } \dots \text{ or } 
 pU_kq\subseteq L\ $$
 is equivalent to 
$$ pV_1q \subseteq L \text{ or } \dots 
 \text{ or } pV_\ell q\subseteq L \,.$$

\begin{proposition}\label{p:syntactic lattice algebra}
The relation $\sim^\diamond_L$ is a congruence relation on 
$A^\diamond$.
Moreover,
for ${\mathcal U},{\mathcal V}\in A^\diamond$, 
it holds that ${\mathcal U}\sim^\diamond_L {\mathcal V}$ 
if and only if $$(\,\forall\ p\in A^*)\ p^{-1}L \circ \mathcal U 
= p^{-1}L\circ \mathcal V\, .$$
\end{proposition}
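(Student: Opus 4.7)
The plan is to follow the pattern of Proposition~\ref{syntactic semiring}: first translate the defining biconditional of $\sim_L^\diamond$ into the equivalent condition on left quotients of $L$ (the ``moreover'' clause), then use that reformulation to check the congruence properties operation by operation.

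For the reformulation, I would write $\mathcal U=\{U_1,\dots,U_k\}$ with $U_i=\{u_{i,1},\dots,u_{i,r_i}\}$ and rewrite $pU_iq\subseteq L$ as $q\in\bigcap_j(pu_{i,j})^{-1}L$, which by (\ref{e-monoidy}) and (\ref{e-semiringy}) equals $q\in p^{-1}L\circ U_i$. Taking the disjunction over $i$ and using (\ref{e-svazy}) turns the defining condition on $\mathcal U$ into $q\in p^{-1}L\circ\mathcal U$, and similarly for $\mathcal V$. Since $q$ ranges over all of $A^*$, the defining biconditional is equivalent to $p^{-1}L\circ\mathcal U=p^{-1}L\circ\mathcal V$ for every $p\in A^*$.

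With this characterization, I would verify the congruence properties. Reflexivity, symmetry, and transitivity are immediate. For $\wedge$ and $\vee$, the formulas $p^{-1}L\circ(\mathcal U\wedge\mathcal X)=(p^{-1}L\circ\mathcal U)\cap(p^{-1}L\circ\mathcal X)$ and the analogous one for $\vee$, coming from (\ref{e-semiringy})--(\ref{e-svazy}), show that replacing $\mathcal U$ by a $\sim_L^\diamond$-equivalent $\mathcal V$ on either side preserves the value. For right multiplication by any $\mathcal X$, the same follows from $p^{-1}L\circ(\mathcal U\cdot\mathcal X)=(p^{-1}L\circ\mathcal U)\circ\mathcal X$ via (\ref{e-monoidy}).

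The step I anticipate as the main obstacle is left multiplication by a general $\mathcal W\in A^\diamond$: proving $\mathcal W\cdot\mathcal U\sim_L^\diamond\mathcal W\cdot\mathcal V$. Using the characterization, we need $(p^{-1}L\circ\mathcal W)\circ\mathcal U=(p^{-1}L\circ\mathcal W)\circ\mathcal V$, but the intermediate state $p^{-1}L\circ\mathcal W$ need not itself be a left quotient of $L$, so the hypothesis does not apply to it directly. My plan is to expand $\mathcal W$ in canonical form $\bigvee_{i'}\bigwedge_{j'}w_{i',j'}$ with $w_{i',j'}\in A^*$ and to push the letters $w_{i',j'}$ to the right using the distributivity rules of (\ref{e_nasobeni}); crucially, the right-hand distributivity laws hold precisely when the right factor lies in $A^*$, which is what makes the reduction legitimate here. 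The aim is to express $p^{-1}L\circ(\mathcal W\cdot\mathcal U)$ as a Boolean combination of the actions $(pw_{i',j'})^{-1}L\circ\mathcal U$, each of which is controlled by the hypothesis applied to the quotient $(pw_{i',j'})^{-1}L$; reassembling the same combination with $\mathcal V$ in place of $\mathcal U$ then yields the desired equality. The delicate bookkeeping forced by the one-sided distributivity of $\cdot$ in $A^\diamond$ is where I expect the only non-routine work.
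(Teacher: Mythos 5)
Your reformulation of $\sim_L^\diamond$ as ``$(\forall\, p\in A^*)\ p^{-1}L\circ\mathcal U=p^{-1}L\circ\mathcal V$'' is exactly the paper's argument for the ``moreover'' clause, and your treatment of reflexivity/symmetry/transitivity, of $\wedge$ and $\vee$, and of right multiplication is fine (the paper dismisses the whole congruence claim as ``easy and similar to the case of monoids'' and only writes out the reformulation). The trouble is the step you yourself single out. Your plan is to express $p^{-1}L\circ(\mathcal W\cdot\mathcal U)$ as the positive Boolean combination $\bigcup_{i'}\bigcap_{j'}\bigl((pw_{i',j'})^{-1}L\circ\mathcal U\bigr)$; but the true value is $(p^{-1}L\circ\mathcal W)\circ\mathcal U=\bigcup_m\bigcap_n\bigcup_{i'}\bigcap_{j'}(pw_{i',j'}u_{m,n})^{-1}L$ for $\mathcal U=\bigvee_m\bigwedge_n u_{m,n}$, and the two expressions differ exactly by the exchange of the outer $\bigcup_m\bigcap_n$ with the inner $\bigcup_{i'}\bigcap_{j'}$. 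That exchange is precisely the failed two-sided distributivity exhibited in the example following Proposition~\ref{l:free lattice algebras}, so no bookkeeping will produce your reduction: the action of a general element of $A^\diamond$ commutes with neither intersections nor unions of states.

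Worse, the statement you are trying to prove at that step is false: $\sim_L^\diamond$ is not compatible with left multiplication by an arbitrary element of $A^\diamond$. Take $A=\{a,b,c,d,e\}$ and $L=\{da,dc,eb,ec\}$. Then $a\vee b\sim_L^\diamond a\vee b\vee c$, since the only contexts with $pcq\in L$ are $(d,\lambda)$ and $(e,\lambda)$, where $da\in L$, resp.\ $eb\in L$; equivalently, both elements act identically on every left quotient of $L$. Yet $(d\wedge e)\cdot(a\vee b)=(da\wedge ea)\vee(db\wedge eb)$ and $(d\wedge e)\cdot(a\vee b\vee c)=(da\wedge ea)\vee(db\wedge eb)\vee(dc\wedge ec)$ are separated by the context $(\lambda,\lambda)$, because $\{dc,ec\}\subseteq L$ while neither $\{da,ea\}$ nor $\{db,eb\}$ is contained in $L$; equivalently, on the state $L\circ(d\wedge e)=d^{-1}L\cap e^{-1}L=\{c\}$ the two elements act as $\emptyset$ and $\{\lambda\}$. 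What your characterization does give immediately is compatibility with left multiplication by a \emph{word} $w$, via $p^{-1}L\circ(w\cdot\mathcal U)=(pw)^{-1}L\circ\mathcal U$, the point being that $(pw)^{-1}L$ is again a left quotient. Together with $\wedge$, $\vee$ and right multiplication by arbitrary elements, that is all the compatibility that can hold; you should restrict the left-multiplication case to $w\in A^*$ and note that this restricted reading is the only one consistent with the one-sided distributivity laws~(\ref{e_nasobeni}) and with the definition of a lattice algebra in Section~6, where the right-hand distributive laws are likewise postulated only for factors coming from $P$.
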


\begin{proof} To show that the relation $\ll$ is a congruence relation on
$A^\diamond$ is easy and similar to the case of monoids.

Let ${\mathcal U}, {\mathcal V}\in A^\diamond$ are of the form
$$\mathcal U=\{U_1,\dots,U_k\},\ \text{ where }\
 U_1=\{u_{1,1},\dots,u_{1,r_1}\},\dots,
 U_k=\{u_{k,1},\dots,u_{k,r_k}\}\,,$$
 $$\mathcal V=\{V_1,\dots,V_\ell\},\  \text{ where }\
  V_1=\{v_{1,1},\dots,v_{1,s_1}\},\dots,
 V_\ell=\{v_{\ell,1},\dots,v_{\ell,s_\ell}\}\, .$$ 
Clearly, ${\mathcal U}\ll {\mathcal V}$ is equivalent to
$(\, \forall\ p,q\in A^*)$ $$ q\in 
((pu_{1,1})^{-1} L \cap \dots \cap (pu_{1,r_1})^{-1} L)
\cup\dots\cup
((pu_{k,1})^{-1} L \cap \dots \cap (pu_{k,r_k})^{-1} L)
$$$$\Longleftrightarrow q\in 
(pv_{1,1})^{-1} L \cap \dots \cap (pv_{1,s_1})^{-1} L
\cup\dots\cup
(pv_{\ell,1})^{-1} L \cap \dots \cap (pv_{\ell,s_\ell})^{-1} L\,,$$
 which is
 $$(\,\forall\ p\in A^*)\ 
 ((pu_{1,1})^{-1} L \cap \dots \cap (pu_{1,r_1})^{-1} L)
\cup\dots\cup
((pu_{k,1})^{-1} L \cap \dots \cap (pu_{k,r_k})^{-1} L)
$$$$=((pv_{1,1})^{-1} L \cap \dots \cap (pv_{1,s_1})^{-1} L)
\cup\dots\cup
((pv_{\ell,1})^{-1} L \cap \dots \cap (pv_{\ell,s_k})^{-1} L)
\, ,$$
 that is
 $(\,\forall\ p\in A^*) \ p^{-1} L\circ \mathcal U 
 =p^{-1}L\circ \mathcal V$.
\qed\end{proof}

The structure $A^\diamond/\ll$ is called the 
{\it syntactic lattice algebra} 
 of $L$.
\medskip

Note that in this third level it is not true that 
the structure $A^\diamond/\!\sim_L^\diamond$ 
is isomorphic to  the transformation lattice algebra
of the whole canonical lattice semiautomaton
$\mathcal L_L$ of $L$ as mentioned in the next example.

\begin{example}
Now we present the syntactic lattice algebra
of the language $L=a^+b^+$. 

First of all, we could mentioned an interesting fact:
the terms $\lambda \wedge ab$ and 
$a\wedge b$ transform $\mathcal L_L$ in a different way, namely 
$K^\lambda \circ (\lambda \wedge ab)=b^*$
and $K^\lambda \circ (a \wedge b)=b^+$. 
However these two terms
$\lambda \wedge ab$ and $a\wedge b$ give the same element 
in  the syntactic semiring of $L$, because 
they transform the states from $\mathcal D_L$ in the same way.
In other words, $\lambda \wedge ab\, \rho_L^\diamond\, a \wedge b$. 
This example just recalls the observation from Proposition~\ref{p:syntactic lattice algebra}, 
that we need to check the images of the three states $L$, $K$ and $b^*$ only.

We can start from the syntactic semiring of $L$, 
since the syntactic lattice algebra can be viewed as an extension of the syntactic semiring
by adding joins.
Thus we need to compute joins of all elements described in Table~\ref{t:syntactic-semiring}.
This can be done by a brute force algorithm, which gives 
Table~\ref{t:syntactic-lattice}. 
\begin{table}[ht]
\begin{center}
\begin{tabular}{cp{.1cm}|cccp{1cm}cp{.1cm}|ccc}
&&$L$&$K$&$b^*$& & &&$L$&$K$&$b^*$ \\
\cline{1-5}\cline{7-11}
$\lambda$&&$L$&$K$&$b^*$&  &$(\lambda\wedge a)\vee (b\wedge ab)$&&$L$&$K^\lambda$&$\emptyset$\\
$a$&&$K$&$K$&$\emptyset$&   &$(\lambda\wedge a)\vee b$ &&$L$&$K^\lambda$&$b^*$  \\
$b$&&$\emptyset$&$b^*$&$b^*$&  & $(\lambda\wedge a)\vee ab$ &&$K^\lambda$&$K^\lambda$&$\emptyset$  \\
$ab$&&$b^*$&$b^*$&$\emptyset$&   & $(\lambda\wedge b)\vee(a\wedge ab) $ &&$b^+$&$b^+$&$b^*$   \\
$ba$&&$\emptyset$&$\emptyset$&$\emptyset$&  & $(\lambda\wedge b)\vee a  $ &&$K$&$K^\lambda$&$b^*$   \\
$\top$ &&$A^*$&$A^*$&$A^*$&       & $(\lambda\wedge b)\vee ab $ &&$b^*$&$b^*$&$b^*$    \\
$\lambda\wedge a$&&$L$&$K$&$\emptyset$&    & $(a\wedge ab)\vee (b\wedge ab) $ &&$b^+$&$b^*$&  $\emptyset$   \\
$\lambda\wedge b$&&$\emptyset$&$b^+$&$b^*$&   &  $(a\wedge ab)\vee \lambda $ &&$K$&$K$&   $b^*$    \\
$\lambda \wedge ab$&&$\emptyset$&$b^+$&$\emptyset$&   & $(a\wedge ab)\vee b $ &&$b^+$&$b^*$&$b^*$   \\
$a\wedge ab$&&$b^+$&$b^+$&$\emptyset$&    & $(b\wedge ab)\vee a $ &&$K$&$K^\lambda$&   $\emptyset$   \\
$b\wedge ab$&&$\emptyset$&$b^*$&$\emptyset$&  & $\lambda \vee ab$  &&$K^\lambda$&$K^\lambda$&   $b^*$   \\
\cline{1-5}\cline{7-11}
\end{tabular}
\end{center}
\caption{The transformations of $\mathcal{L}_L$ for the language $L=a^+b^+$.}
\label{t:syntactic-lattice}
\end{table}

To see that the computation is complete, we have to add some basic observations.
At first, one can check the following equalities $\lambda=(\lambda\wedge a)\vee (\lambda \wedge b)$,
$a=(\lambda\wedge a)\vee (a \wedge ab)$ and $b=(\lambda\wedge b)\vee (b \wedge ab)$. 
Therefore we can remove elements $\lambda$, $a$ and $b$ from the generating set.
Since the elements $\top$, $\lambda \wedge ab$ and $ba$ are comparable with the others,
we do not obtain new elements adding these element into the joins.
Thus, we need to compute the joins for five elements $\lambda\wedge a$,  $\lambda\wedge b$, $a\wedge ab$
$b\wedge ab$ and $ab$. 

We observe that $a \wedge ab$ transforms the state $L$ to $b^+$, and that 
no other image of $L$ under applications $\lambda\wedge a$, $\lambda\wedge b$, $b\wedge ab$ contains 
the word $b$. This means that $a \wedge ab$ can not be covered by a join of elements 
$\lambda\wedge a$, $\lambda\wedge b$, $b\wedge ab$.
In the similar way,  $b\wedge ab$ transforms $K$ to $b^*$ which contains $\lambda$, and therefore 
$b\wedge ab$ can not be covered by a join of elements 
$\lambda\wedge a$, $\lambda\wedge b$, $a\wedge ab$.
To see that both $\lambda\wedge a$ and $\lambda\wedge b$ can not be covered by a 
join of the  others elements from the following ones
$\lambda\wedge a$,  $\lambda\wedge b$, $a\wedge ab$
$b\wedge ab$, $ab$, we just mention that $K\circ (\lambda\wedge a)=K$ contains $ab$ and
$b^*\circ (\lambda\wedge b)=b^*$ contains $\lambda$. 

From the observations from the previous paragraph
we can state that joins of elements 
$\lambda\wedge a,  \lambda\wedge b, a\wedge ab, b\wedge ab$ are pairwise different
elements of the syntactic lattice algebra of $L$. So we obtain 15 elements in this way. 
If we add the element $ab$ into some of these joins, then we can remove from this join both
$a\wedge ab$ and $ b\wedge ab$ if they occur. So, we obtain additionally 
four elements $ab$, $ab\vee (\lambda \wedge a)$,  $ab\vee (\lambda \wedge b)$,
$ab\vee (\lambda \wedge a)\vee (\lambda \wedge b)=ab\vee \lambda$.

Altogether, the syntactic lattice of $L$ consists of 22 elements (see Figure~\ref{f:lattice-algebra}):
$\perp=ba$, $\lambda \wedge ab$, 15 elements described above as joins of elements 
$\lambda\wedge a,  \lambda\wedge b, a\wedge ab, b\wedge ab$, and finally the elements 
$ab$, $ab\vee (\lambda \wedge a)$,  $ab\vee (\lambda \wedge b)$, $ab\vee \lambda$, $\top$.
\begin{figure}[ht]
\begin{center}
  \begin{picture}(120,93)(-14,-5)
    \unitlength=.5\unitlength
\node[NLangle=0.0](p1)(0,48){{\footnotesize $\lambda\!\wedge\! a$}}
   \node[NLangle=0.0](p2)(24,48){{\footnotesize$\lambda\!\wedge\! b$}}
    \node[NLangle=0.0](p3)(48,48){{\footnotesize$a\!\wedge\! ab$}}
    \node[NLangle=0.0](d1)(0,72){{\footnotesize$\lambda$}}
   \node[NLangle=0.0](d2)(24,72){{\footnotesize$a$}}
    \node[NLangle=0.0](d3)(48,72){}
    \node[NLangle=0.0](v)(24,96){}
    \node[NLangle=0.0](s2)(24,24){{\footnotesize$\lambda\!\wedge\!  ab$}}
    \node[NLangle=0.0](s1)(24,0){{\footnotesize$\perp$}}
    \node[NLangle=0.0](pp1)(72,72){}
   \node[NLangle=0.0](pp2)(96,72){{\footnotesize$b$}}
    \node[NLangle=0.0](pp3)(120,72){}
    \node[NLangle=0.0](ps)(96,48){{\footnotesize$b\! \wedge\! ab$}}
    \node[NLangle=0.0](pd1)(72,96){}
   \node[NLangle=0.0](pd2)(96,96){}
    \node[NLangle=0.0](pd3)(120,96){}
    \node[NLangle=0.0](pv)(96,120){}
    \node[NLangle=0.0](dp3)(168,96){{\footnotesize$ab$}}
   \node[NLangle=0.0](dd2)(144,120){}
    \node[NLangle=0.0](dd3)(168,120){}
    \node[NLangle=0.0](dv)(144,144){}
  \node[NLangle=0.0](vvv)(144,168){{\footnotesize$\top$}}
\drawedge[AHnb=0,dash={1 1}0](s1,s2){}
\drawedge[AHnb=0,dash={1 1}0](s2,p1){}
\drawedge[AHnb=0,dash={1 1}0](s2,p2){}
\drawedge[AHnb=0,dash={1 1}0](s2,p3){}
\drawedge[AHnb=0,dash={1 1}0](v,d1){}
\drawedge[AHnb=0,dash={1 1}0](v,d2){}
\drawedge[AHnb=0,dash={1 1}0](v,d3){}
\drawedge[AHnb=0,dash={1 1}0](p1,d1){}
\drawedge[AHnb=0,dash={1 1}0](p1,d2){}
\drawedge[AHnb=0,dash={1 1}0](p3,d3){}
\drawedge[AHnb=0,dash={1 1}0](p2,d1){}
\drawedge[AHnb=0,dash={1 1}0](p2,d3){}
\drawedge[AHnb=0,dash={1 1}0](p3,d2){}
\drawedge[AHnb=0,dash={1 1}0](ps,pp1){}
\drawedge[AHnb=0,dash={1 1}0](ps,pp2){}
\drawedge[AHnb=0,dash={1 1}0](ps,pp3){}
\drawedge[AHnb=0,dash={1 1}0](pv,pd1){}
\drawedge[AHnb=0,dash={1 1}0](pv,pd2){}
\drawedge[AHnb=0,dash={1 1}0](pv,pd3){}
\drawedge[AHnb=0,dash={1 1}0](pp1,pd1){}
\drawedge[AHnb=0,dash={1 1}0](pp1,pd2){}
\drawedge[AHnb=0,dash={1 1}0](pp3,pd3){}
\drawedge[AHnb=0,dash={1 1}0](pp2,pd1){}
\drawedge[AHnb=0,dash={1 1}0](pp2,pd3){}
\drawedge[AHnb=0,dash={1 1}0](pp3,pd2){}
\drawedge[AHnb=0,dash={1 1}0](s2,ps){}
\drawedge[AHnb=0,dash={1 1}0](p1,pp1){}
\drawedge[AHnb=0,dash={1 1}0](p2,pp2){}
\drawedge[AHnb=0,dash={1 1}0](p3,pp3){}
\drawedge[AHnb=0,dash={1 1}0](d1,pd1){}
\drawedge[AHnb=0,dash={1 1}0](d2,pd2){}
\drawedge[AHnb=0,dash={1 1}0](d3,pd3){}
\drawedge[AHnb=0,dash={1 1}0](v,pv){}
\drawedge[AHnb=0,dash={1 1}0](dp3,dd2){}
\drawedge[AHnb=0,dash={1 1}0](dp3,dd3){}
\drawedge[AHnb=0,dash={1 1}0](dd3,dv){}
\drawedge[AHnb=0,dash={1 1}0](dd2,dv){}
\drawedge[AHnb=0,dash={1 1}0](dv,vvv){}
\drawedge[AHnb=0,dash={1 1}0](pp3,dp3){}
\drawedge[AHnb=0,dash={1 1}0](pd2,dd2){}
\drawedge[AHnb=0,dash={1 1}0](pd3,dd3){}
\drawedge[AHnb=0,dash={1 1}0](pv,dv){}
\end{picture}
\caption{The order of the syntactic lattice algebra of the language $L=a^+b^+$.}
\label{f:lattice-algebra}
\end{center}
\end{figure}
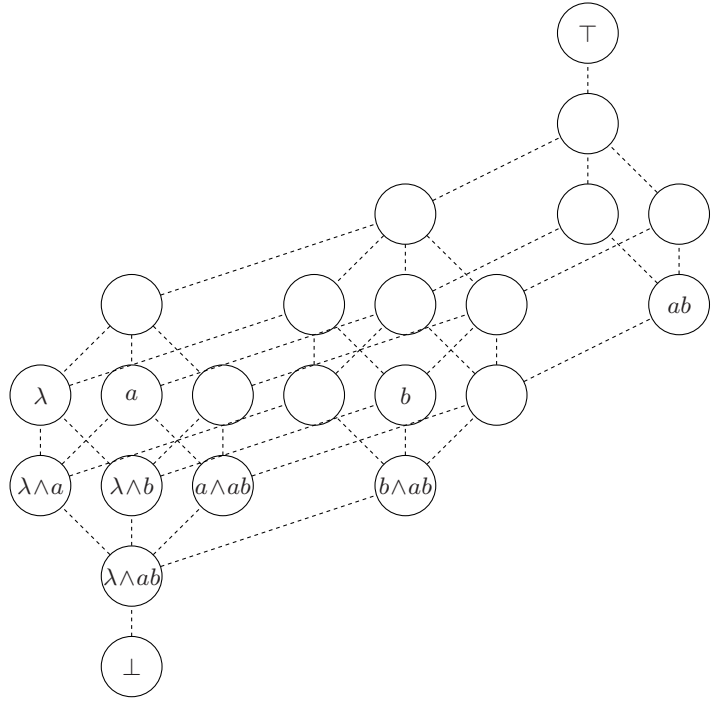
\end{example}

\section{General algebras}
 
 The Eilenberg like theorems establish 
 bijections between certain varieties of regular languages and
 pseudovarieties of certain algebraic systems.
Not every finite monoid is isomorphic to a syntactic one,  we have to 
generate the appropriate pseudovariety. Similarly in remaining
levels.
\medskip

\noindent {\bf Monoids}. 
Here one considers varieties of languages
and pseudovarieties of finite monoids. The Eilenberg theorem can be find 
in e.g.~\cite{pi}.
\medskip

\noindent {\bf Semirings}.
Here one considers the so-called conjunctive varieties of languages
and pseudovarieties of finite semirings. For more details see e.g.~\cite{lp-examples}.
\medskip

\noindent {\bf Lattice algebras}.
The following new definition of a notion of lattice algebras 
is a part of an effort of formulation of Eilenberg like theorem using the notion of
syntactic lattice algebra. Such a theorem is not formulated or even proved in this paper. 
Nevertheless, we try here to characterize the finite factors of $A^\diamond$.

A {\it lattice algebra} is 8-tuple $(K,\wedge,\vee,\cdot, P,\bot,\top,1)$ where 
$(K,\wedge,\vee)$ is a bounded distributive lattice with the bottom element $\bot$ 
and the top element $\top$,
$(K,\cdot,1)$ is a monoid with right zero elements $\bot$ and $\top$, 
$P$ is a finite subset of $K$ such that the 
lattice $(K,\wedge,\vee)$ is generated by the set of all 
products of elements from $P$
and $\top \cdot p=\top$ and $\bot\cdot p=\bot$ hold for $p\in P$,
and finally such that the distributivity laws
$$q\cdot (r\wedge s)=q\cdot r \wedge q\cdot s,\ \
q\cdot (r\vee s)=q\cdot r \vee q\cdot s\, ,$$
$$(q\wedge r)\cdot p= q\cdot p\wedge r \cdot p,\ \
(q\vee r)\cdot p=q\cdot p\vee r \cdot p$$
hold for all $q,r,s\in K$ and $p\in P$.

Notice that, considering
$A^\diamond$, take $P$ equal to the image of $A$, $\top$
the image of $\{\emptyset\}$ and $\bot$
the image of $\emptyset$. 

\section{Characterizing Reversible Languages}

We consider the class of all reversible languages 
(see Golovkins and Pin~\cite{golovkins-pin}).
We present them using the Ambainis and Freivalds condition
(see~\cite{ambainis-freivalds}). 

\begin{proposition}[{\cite{ambainis-freivalds,golovkins-pin}}]
Let $L$ be a regular language over an alphabet $A$. 
Then $L$ is recognized by a reversible  automaton if and only if the following condition for the canonical 
automaton of $L$ holds:
\begin{equation}
(\, \forall\, x,y\in A^*, f, g \in Q\, )\ f\not=g, 
f\circ x =g = g\circ x \implies g\circ y =g \, .
\label{e-AF}
\end{equation} 
\end{proposition}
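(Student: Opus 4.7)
The statement is an ``iff'', and I would sketch each direction separately.

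($\Rightarrow$) First, I would assume $L$ is recognized by a reversible automaton $\mathcal A$, meaning each letter $a\in A$ acts as a partial injective function on the states of $\mathcal A$. We may assume $\mathcal A$ is trim; let $\phi:\mathcal A \to \mathsf D_L$ be the canonical surjection onto the minimal DFA. Suppose $f\neq g$ and $f\circ x = g = g\circ x$ in $\mathsf D_L$; I want to deduce $g\circ y = g$ for all $y$. If $g$ is already a sink this is trivial, so I would assume $g$ is live (neither $\emptyset$ nor $A^*$). Then $f,g$ lift to reachable states $\tilde f,\tilde g$ of $\mathcal A$, and both $\tilde g\circ x$ and $\tilde f\circ x$ are defined, because $g$ being live forces some continuation to a final state. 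Iterating $x$ on $\tilde g$ keeps us in the fibre $G:=\phi^{-1}(g)$, and pigeonhole together with injectivity of the $x$-action yields $\tilde g\circ x^k = \tilde g$ for some $k\geq 1$; hence $x$ restricts to a bijection on the finite set $G$. Then $\tilde f\circ x$ lies in $G$ and has a preimage $p\in G$ under $x$, but $\tilde f\notin G$ provides a second preimage, contradicting injectivity of the $x$-action on $\mathcal A$. Therefore $g$ must be a sink.

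($\Leftarrow$) Conversely, assuming~(\ref{e-AF}), I would construct a reversible automaton from $\mathsf D_L$ by deleting the dead sink $\emptyset$ (so transitions into it become undefined) and, if the live sink $A^*$ is reachable, replacing it by a small reversible accepting gadget whose transitions preserve injectivity. The heart of the argument is injectivity of each letter on the non-sink part: I claim that no distinct non-sink states $p\neq q$ can satisfy $p\circ a = q\circ a = r$ with $r$ itself non-sink. To see this, I would pick $n$ so that $a^n$ is an idempotent element of the finite transition monoid of $\mathsf D_L$, and set $s = p\circ a^n = q\circ a^n$; idempotency yields $s\circ a^n = s$. If $p\neq s$, apply~(\ref{e-AF}) with $f=p$, $g=s$, $x=a^n$ to force $s$ to be a sink; if instead $p = s$, then $p$ is fixed by $a^n$, and~(\ref{e-AF}) with $f=q$, $g=p$, $x=a^n$ (using $p\neq q$) forces $p$ to be a sink. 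Either way a supposedly non-sink state is forced to be a sink, which is the desired contradiction.

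The main obstacle will be the converse direction, and specifically the treatment of the live sink $A^*$: keeping $A^*$ as-is inside the modified automaton would introduce multiple incoming edges to a single state and break injectivity of letters, while replacing $A^*$ by a reversible accept-gadget that still accepts every continuation needs care --- essentially a separate ``copy'' for each entering edge, or an accept-on-entry convention. Verifying that the resulting finite automaton continues to recognize exactly $L$ while every letter acts as a partial injection is the technically most delicate bookkeeping in the proof.
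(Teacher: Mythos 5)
The paper offers no proof of this proposition: it is quoted from the cited references, and the surrounding text never even defines what a reversible automaton is, so there is nothing here to compare your argument against. I therefore have to assess it on its own terms, and the main difficulty is not the bookkeeping you anticipate but the definition you have silently adopted.

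Both of your directions treat a reversible automaton as a partial-injective automaton that reads its input to the end: a single initial state is needed for the canonical surjection $\phi\colon\mathcal A\to\mathsf D_L$ in the forward direction and for the surgery on $\mathsf D_L$ in the converse. Under that definition the proposition is false, so the converse cannot be completed by any choice of accept-gadget. Indeed, $L=A^*aA^*$ satisfies (\ref{e-AF}) (the only merging transition falls into the sink $A^*$), yet a trim partial-injective automaton recognizing $L$ must be complete, because every word extends to a word of $L$; each letter is then a total injection on a finite state set, hence a permutation, so $L$ would be a group language --- which it is not, its syntactic monoid being the two-element monoid with zero. Allowing several initial states does not repair this: disjoint union would make the class closed under union, whereas the paper itself records that the class defined by (\ref{e-AF}) is not. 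Concretely, $\{a\}$ and $(aa)^*$ both satisfy (\ref{e-AF}), but the minimal automaton of $\{a\}\cup(aa)^*$ is a path $q_0\to q_1\to q_2\to q_3\to q_2$ containing the forbidden configuration with $f=q_0$, $g=q_2$, $x=aa$, $y=a$. The reversible automata of Ambainis--Freivalds and Golovkins--Pin are a genuinely different object (several initial states and, in the quantum-automaton origin of the condition, halting accept/reject states with an end-marker), and identifying the correct model is precisely where the substance of this proof lies.

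Independently of the above, your converse has an internal gap. In the case $p\neq s$ you apply (\ref{e-AF}) with $f=p$, $g=s=p\circ a^n$ and conclude that $s$ is a sink; but $s$ was never assumed to be non-sink --- your hypothesis is that $r=p\circ a$ is non-sink, and that does not prevent $r\circ a^{n-1}$ from being a sink --- so no contradiction follows and the injectivity claim is not established. Your forward direction, by contrast, is internally correct for the single-initial model, and can even be shortened: the $x$-action is total and injective on the finite fibre $G$, hence already a bijection on $G$, so the pigeonhole step is superfluous.
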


Note that a condition from the previous statement is 
usually formulated in a different way, namely
that the canonical automaton of $L$ does not contain the following configuration, with
$f\not= g\not =h$.

\begin{figure}[ht]
\begin{center}
\begin{picture}(50,20)(0,-10)
    \unitlength=.5\unitlength
    \node(n1)(0,-10){$f$}
    \node(n2)(50,-10){$g$}
    \node(n3)(100,-10){$h$}
    \drawloop(n2){$x$}
    \drawedge(n1,n2){$x$}
    \drawedge(n2,n3){$y$}
\end{picture}
\caption{The forbidden configuration for reversible language.}
\label{f:reversible}
\end{center}
\end{figure}
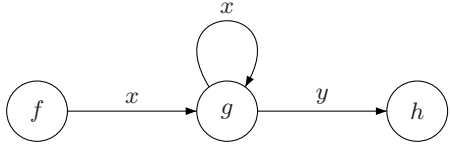

In~\cite{golovkins-pin}
the Ambainis-Freivalds condition (\ref{e-AF}) for the language $L$
was translated to a certain algebraic condition 
concerning the syntactic monoid of $L$
together with the image of $L$ in the syntactic homomorphism.
They also mention that this class is not closed under binary intersections nor 
unions. Therefore it is not an instance of any known Eilenberg correspondence.

Here we show an equivalent condition which is, in some sense, an identity 
for the canonical 
lattice algebra of the considered language.
We need the following classical notion from the semigroup theory. 
Each element $s$ in a finite semigroup has a unique idempotent element among its powers, 
which is denoted by $s^\omega$. So we use this notation for lattice algebra, where 
this operation ${(\underline{\phantom{s}}})^\omega$ is related to the operation of multiplication.
Moreover, in a fixed finite semigroup $S$, one can find natural number $m$ such that 
$s^\omega =s^m$ for every element $s\in S$. 

We are not going to define here the notion of an identity for 
(finite) lattice algebras in a full generality. 
Nevertheless, we use the concrete condition
\begin{equation}
x^\omega y \vee (x^\omega z \wedge t) =  
x^\omega y \vee (x^\omega t \wedge z)\ . 
\label{e-identity}
\end{equation}
It is {\it valid} in the syntactic lattice algebra $\mathcal L_L$ of the language $L$
if we get the same element of (\ref{e-identity}) on left and right sides after 
substituting $p\sim^\diamond_L$, $u\sim^\diamond_L$, 
$v\sim^\diamond_L$ and $w\sim^\diamond_L$ ($p,u,v,w\in A^*$),
for $x,y,z$ and $t$, respectively.

\begin{proposition}
The canonical automaton of a regular language $L$ satisfies (\ref{e-AF}) if and only if
the canonical lattice algebra of $L$ satisfies 
condition
(\ref{e-identity}).
\end{proposition}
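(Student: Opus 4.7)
The plan is to translate both conditions into set-theoretic statements about the canonical deterministic automaton $\mathsf D_L$, using Proposition~\ref{p:syntactic lattice algebra}. Fix a positive integer $M$ large enough that $p^M$ acts as an idempotent transformation on the (finite) state set of $\mathsf D_L$, so that $p^\omega$ is realised by the word $p^M$. For any $q,p,u,v,w\in A^*$, set $f=q^{-1}L$ and $g=f\circ p^\omega$. Evaluating both sides of (\ref{e-identity}) at $f$ yields the subsets
\begin{equation*}
(g\circ u)\cup\bigl((g\circ v)\cap(f\circ w)\bigr)\quad\text{and}\quad(g\circ u)\cup\bigl((g\circ w)\cap(f\circ v)\bigr)
\end{equation*}
of $A^*$, and (\ref{e-identity}) holds in the syntactic lattice algebra iff these two sets coincide for every choice of $q,p,u,v,w$.

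For the forward direction I assume (\ref{e-AF}). If $f=g$ the two displayed sets are manifestly equal by the symmetry $v\leftrightarrow w$. Otherwise $f\neq g$; applying (\ref{e-AF}) with $x$ replaced by the word $p^M$ and with states $f,g$ as above (noting $f\circ p^M=g$ and $g\circ p^M=g$ by idempotency) one concludes $g\circ y=g$ for every $y\in A^*$. A short induction on word length shows that a state fixed by every letter of $A$ must be either $\emptyset$ or $A^*$, so $g$ is a trivial trap state. Consequently $g\circ u=g\circ v=g\circ w=g$ and both displayed sets collapse to $g$.

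For the converse I argue by contraposition. Suppose states $f_0,g_0$ and words $x,y$ witness the failure of (\ref{e-AF}): $f_0\circ x=g_0=g_0\circ x$, $f_0\neq g_0$, and $g_0\circ y\neq g_0$. The last condition forces $g_0\notin\{\emptyset,A^*\}$, while $f_0\neq\emptyset$ because $f_0\circ x=g_0\neq\emptyset$. Pick $q$ with $q^{-1}L=f_0$ and set $p=x$. Since $f_0\neq g_0$ as subsets of $A^*$, at least one of $g_0\setminus f_0$ and $f_0\setminus g_0$ is nonempty. In the first case choose $v\in g_0\setminus f_0$, $w\in f_0$, and $u\notin g_0$; in the second, choose $v\in g_0$, $w\in f_0\setminus g_0$, and $u\notin g_0$. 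Testing membership of $\lambda$ in both sides evaluated at $f_0$ then shows $\lambda$ lies in the left-hand set (via the surviving $(g_0\circ v)\cap(f_0\circ w)$ term) but not in the right-hand one (either $\lambda\notin g_0\circ w$ or $\lambda\notin f_0\circ v$), witnessing the failure of (\ref{e-identity}) in $A^\diamond/\!\sim_L^\diamond$.

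The main conceptual obstacle is bridging the $\omega$-operation of a finite semigroup with an explicit choice of word: one must commit to an $M$ with $p^{2M}=p^M$ in the transition monoid and then recognise the word $p^M$ as a legitimate instance of the variable $x$ in (\ref{e-AF}). Once this bridge is in place, the rest of the argument reduces to a small case analysis on the symmetric difference $f_0\triangle g_0$ and a one-element membership check at $\lambda$.
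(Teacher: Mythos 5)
Your proof is correct, and while the forward direction matches the paper's, your converse takes a genuinely different and leaner route. For the forward direction you, like the paper, split on whether $f\circ p^\omega=f$: the equal case is symmetry of $\cap$, and in the unequal case (\ref{e-AF}) applied with $x=p^M$ forces $g=f\circ p^\omega$ to be fixed by every word, so both sides collapse to $g$ (your detour through ``$g\in\{\emptyset,A^*\}$'' is correct but unnecessary there, since $g\circ y=g$ for all $y$ already gives $g\circ u=g\circ v=g\circ w=g$). In the converse the paper keeps the whole forbidden configuration $f,g,h=g\circ y$ in play, picks separate witnesses $s$ for $f\neq g$ and $r$ for $g\neq h$, and runs a four-case analysis in which the chosen $u,v,w$ sometimes involve the word $y$. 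You instead observe that the only thing the hypothesis $g\circ y\neq g$ contributes is $g\notin\{\emptyset,A^*\}$ (those two states being fixed by every word), after which a two-case analysis on which of $g\setminus f$ and $f\setminus g$ is nonempty suffices. I checked both cases: with $u\notin g$, the word $\lambda$ lies in the left-hand evaluation at $f$ iff $v\in g$ and $w\in f$, and in the right-hand one iff $w\in g$ and $v\in f$; your choices make the first hold and the second fail, and all required elements exist because $f\neq\emptyset$ and $\emptyset\neq g\neq A^*$. This is consistent with the forward direction, since under (\ref{e-AF}) any such $g$ would necessarily be $\emptyset$ or $A^*$. What your approach buys is economy --- two cases instead of four, and no need for $h$, $y$ or $r$ in the construction of $u,v,w$; what the paper's buys is a proof read directly off the pictured forbidden configuration. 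Both arguments correctly reduce validity of (\ref{e-identity}) to equality of the evaluations at all states $q^{-1}L$ via Proposition~\ref{p:syntactic lattice algebra}, and your handling of $p^\omega$ as the idempotent power $p^M$ of the transformation is the intended reading.
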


\begin{proof}
To simplify notation we write simply $u$ instead of $u\sim^\diamond_L$,
for any $u\in A^*$.  
This simplification does not lead to a confusion, because
for a state of $K$ of the canonical semiautomaton  of a language $L$, 
by $K\circ (U\sim^\diamond_L)$ is meant $K\circ U$. 

Let $L$ be a regular language with the canonical automaton satisfying the 
condition (\ref{e-AF}).
Let  $p,u,v,w\in A^*$ be arbitrary words and
denote ${\mathcal U}=p^\omega u \vee (p^\omega v \wedge w),{\mathcal V}=p^\omega u \vee (p^\omega w \wedge v)$,
both from
$A^\diamond/{\sim^\diamond_L}$. 
Furthermore, let $s\in A^*$ be an arbitrary word and consider the state $K=s^{-1}L$ in 
the canonical automaton of $L$.
We need to show that $K \circ {\mathcal U}=K\circ {\mathcal V}$.
At first, assume that $K\circ p^\omega =K$. Then  
$K \circ {\mathcal U}=K\circ u\cup  (K\circ v \cap K\circ w)$ which is equal to $K\circ {\mathcal V}$.
Assume now that $K\circ p^\omega \not =K$, particularly $K \circ p \not = K$.
From the definition of $p^\omega$ we know that $(K\circ p^\omega)\circ p^\omega = K\circ p^\omega$. 
Since the canonical semiautomaton  $\mathcal D_L$ satisfies (\ref{e-AF}), 
we get that $(K\circ p^\omega )\circ y =K\circ p^\omega$
for every $y\in A^*$. Therefore, 
$K \circ {\mathcal U}=K\circ p^\omega \cup  (K\circ p^\omega \cap K\circ w)=K\circ p^\omega$
and  similarly we obtain 
$K \circ {\mathcal V}=K\circ p^\omega \cup  (K\circ p^\omega \cap K\circ v)=K\circ p^\omega$.

To prove the opposite implication, we consider a regular language
$L$ which has the forbidden configuration in 
its canonical semiautomaton $\mathcal D_L$ and then we show that  its 
canonical lattice algebra does not 
satisfy~(\ref{e-identity}).
Let  $f,g,h$ be states in $\mathcal D_L$ and $x,y\in A^*$ be words such that
$f\not =g \not= h$, $f\circ x =g = g\circ x$ and $h=g\circ y$. 
Recall that $f,g,h \subseteq A^*$, 
because there are left quotient of $L$.
Since $f\not =g$, there is a word $s\in A^*$ such that $s\in f$, $s\not \in g$ or 
$s\not \in f$, $s\in g$. Note that  the condition  $s\in f$ is equivalent to 
$\lambda \in s^{-1} f$, i.e. 
$\lambda\in f\circ s$.
Similarly, since $g\not = h$, there is a word $r$ such that $r\in g$, $r\not \in h$ or 
$r\not \in g$, $r\in h$.
Thus there are four cases to be discussed. In all these cases, the word 
$p=x$ is already fixed by the forbidden configuration.
For this $p$, we have $f \circ p^\omega =g$.

Case I) If $s\in f$, $s\not \in g$,  $r\in g$, $r\not \in h$ then  we put $u=s$, $v=r$ and $w=s$
and consequently we denote ${\mathcal U}= p^\omega u \vee (p^\omega v \wedge w)$, ${\mathcal V} = 
p^\omega u \vee (p^\omega w \wedge v)$.
Now we see that 
\begin{eqnarray*}
\lambda &\not \in& g \circ s= (f \circ p^\omega) \circ s=f \circ (p^\omega u)=f \circ (p^\omega w)\, , \\ 
\lambda & \in & g\circ r= (f \circ p^\omega)\circ  r=f \circ (p^\omega v)\, \text{ and }\\
\lambda & \in & f \circ s = f \circ w\, . 
\end{eqnarray*}
Therefore,
$\lambda \in f \circ p^\omega u \cup (f  \circ p^\omega v \cap f\circ w)=f\circ {\mathcal U}$
and $\lambda \not\in f \circ p^\omega u \cup (f  \circ p^\omega w \cap f\circ v)=f\circ {\mathcal V}$.
Hence $f\circ {\mathcal U}\not =f\circ {\mathcal V}$ and ${\mathcal U}$, ${\mathcal V}$ 
are different elements in the canonical lattice algebra of $L$.

Case II) If $s\in f$, $s\not \in g$,  $r\not\in g$, $r\in h$ then  we put $u=s$, $v=yr$ and $w=s$
and again ${\mathcal U}= p^\omega u \vee (p^\omega v \wedge w)$, ${\mathcal V} = 
p^\omega u \vee (p^\omega w \wedge v)$.
Now we have 
\begin{eqnarray*}
\lambda & \not \in & g \circ s= (f \circ p^\omega) \circ s=f \circ (p^\omega u)=f \circ (p^\omega w)\, ,\\  
\lambda & \in & h\circ r= (f \circ p^\omega y)\circ  r=f \circ (p^\omega v)\, \text{ and }\\
\lambda & \in & f \circ s = f \circ w\, .
\end{eqnarray*}
Therefore, $\lambda \in f \circ p^\omega u \cup (f  \circ p^\omega v \cap f\circ w)=f\circ {\mathcal U}$
and $\lambda \not\in f \circ p^\omega u \cup (f  \circ p^\omega w \cap f\circ v)=f\circ {\mathcal V}$.
This means that ${\mathcal U}\not ={\mathcal V}$ in the canonical lattice algebra of $L$.

Case III) If $s\not\in f$, $s\in g$,  $r\in g$, $r\not \in h$ then  we put $u=y r$, $v=s$, 
$w=pr$, ${\mathcal U}= p^\omega u \vee (p^\omega v \wedge w)$ and ${\mathcal V} = p^\omega u \vee (p^\omega w \wedge v)$.
Now we have 
\begin{eqnarray*}
\lambda & \not \in & h \circ r=(g\circ y)\circ r= g\circ u= 
(f \circ p^\omega) \circ u=f \circ (p^\omega u)\, , \\
\lambda & \in & g \circ s= g\circ v=(f \circ p^\omega) \circ v=
f \circ (p^\omega v)\, , \\   
\lambda & \in&  g\circ r= (f\circ p)\circ r= f\circ w
\text{ and } \\
\lambda & \not\in & f \circ s = f\circ v\, .
\end{eqnarray*}
Hence, $\lambda \in f \circ p^\omega u \cup (f  \circ p^\omega v \cap f\circ w)=f\circ {\mathcal U}$
and $\lambda \not\in f \circ p^\omega u \cup (f  \circ p^\omega w \cap f\circ v)=f\circ {\mathcal V}$.

Case IV) If $s\not\in f$, $s\in g$,  $r\not\in g$, $r\in h$ then  we put $u=r$, $v=s$ 
and $w=ps$ and consequently ${\mathcal U}= p^\omega u \vee (p^\omega v \wedge w)$, ${\mathcal V} = 
p^\omega u \vee (p^\omega w \wedge v)$.
Now we see that 
\begin{eqnarray*}
 \lambda & \not \in & g \circ r= g\circ u= 
(f \circ p^\omega) \circ u=f \circ (p^\omega u)\, ,\\
\lambda & \in & g \circ s= g\circ v=(f \circ p^\omega) \circ v=
f \circ (p^\omega v)\, ,\\
\lambda &\in& g\circ s= (f\circ p)\circ s= f\circ w \text{ and }\\
\lambda &\not\in & f \circ s = f\circ v\, . 
\end{eqnarray*}
And we can finish this case in the same manner as the previous ones.
\qed\end{proof}

\end{document}